\documentclass{article}
\usepackage[utf8]{inputenc}
\usepackage{amsmath,amssymb}

\usepackage{xspace}

\usepackage[round]{natbib}
\bibliographystyle{plainnat}

\usepackage{amsmath,graphicx}
\usepackage{float}
\usepackage[subrefformat=parens,labelformat=parens]{subcaption}
\usepackage{fullpage}

\usepackage{enumitem}
\usepackage{tabularx}

%%% hyperlinks with colors %%
\usepackage{xcolor}

%%% Midnightblue color %%%%%
\definecolor{customcolor}{HTML}{191970}

\usepackage[colorlinks=true,allcolors=customcolor]{hyperref}
%\usepackage[colorlinks=true,linkcolor=black,citecolor=black]{hyperref}
%%%%%%%%%%%%%%%%

%%%% use \Cref{} to show Theorem x, Lemma y without typing "Theorem", "Lemma" and so on %%%
\usepackage[nameinlink]{cleveref}
%%%%%

%%%%% gray block/box %%%%%%
\usepackage{color}
\definecolor{emphcol}{gray}{.75}
\newcommand{\block}[1]{
	\begin{center}
		\colorbox{emphcol}{
			\fbox{
				\begin{minipage}[t]{0.92\textwidth}
					#1
				\end{minipage}
			}
		}
	\end{center}
}
%%%%%% end added by paolo %%%%

%%% commands %%%

%%%%%%%% Definitions, Theorems, etc.
\usepackage{amsthm}
\newtheorem{theorem}{Theorem}
\newtheorem{definition}{Definition}

\newtheorem{lemma}{Lemma}

\newtheorem{example}{Example}

\newtheorem{remark}{Remark}

\newtheorem{corollary}{Corollary}
\newtheorem{assumption}{Assumption}

%%%%% commands specific for this model/paper &&&

%%%% names only %%%%
\newcommand{\token}{\mathbb{TK}}%{\theta}
\newcommand{\price}{\mathbb{PRICE}}%{price}
\newcommand{\TM}[2]{#1^{(#2)}}
\newcommand{\tm}[1]{\TM{#1}{t}}

\newcommand{\disc}[1]{#1^{(\delta, \infty)}}
\newcommand{\discst}[2]{#1^{(\delta, \infty|#2)}}

\newcommand{\VE}{\mathcal{V}}
\newcommand{\UE}{\mathcal{U}}

\newcommand{\Rew}{\mathcal{R}}
\newcommand{\Ser}{\mathcal{S}}

%%%% end commands %%%%%

\newcommand{\Roptsingle}{R_{single}}

\title{\bf Single-token vs Two-token Blockchain Tokenomics}

\author{
  Aggelos Kiayias\\
  University of Edinburgh, IOG\\
  \texttt{aggelos.kiayias@ed.ac.uk}
  \and
  Philip Lazos\\
  IOG\\
  \texttt{philip.lazos@iohk.io}
  \and
  Paolo Penna\\
  IOG\\
  \texttt{paolo.penna@iohk.io}
}

\date{\today}

\begin{document}
	
\maketitle
\begin{abstract}
	We study long-term equilibria that arise in the 
    token monetary policy, or {\em tokenomics},  design of  proof-of-stake (PoS) blockchain systems that engage utility maximizing \emph{users} and \emph{validators}.
 Validators are system maintainers who get rewarded with tokens for performing the work necessary for the system to function properly, while users compete and pay with such tokens for getting a  desired portion of the  system service. 
	
	We study how the system service provision and suitable rewards schemes together can lead to equilibria with the following desirable characteristics (1) viability: the system keeps parties engaged, 
 (2) decentralization and skin-in-the-game: multiple sufficiently invested validators are participating, 
 (3) stability: the price path of the underlying token used to transact with the system does not change widely over time,  and
 (4)  feasibility: the mechanism is easy to implement as a smart contract, e.g., it does not require a fiat reserve  on-chain to perform token {\em buybacks} or to perform bookkeeping of exponentially growing token holdings.
 
 Our analysis enables us to put forward a novel generic mechanism for blockchain monetary policy that we call {\em quantitative rewarding} (QR). We investigate how to implement QR in single-token and two-token proof of stake (PoS) blockchain systems. The latter are systems that utilize one token for the users to pay the transaction fees and a different token for the validators to participate in the PoS protocol and get rewarded. 
 Our approach demonstrates a concrete advantage  of the two-token setting in terms of the ability of the QR mechanism to be realized effectively and provide good equilibria. Our analysis also reveals an inherent limitation of the single token setting in terms of implementing an effective blockchain monetary policy --- a distinction that is, to the best of our knowledge, highlighted for the first time.

\end{abstract}
\maketitle

% Paper body

% Paper body (2nd arXiv version)

\section{Introduction}
Blockchains create value by offering services in a fully decentralized manner, wherein \emph{users} pay fees to access these services, while the functioning and  security of the system is guaranteed by a set of nodes or \emph{validators} who receive rewards for performing the necessary computations required by  the protocol. 
These payments are issued in the system's native \emph{token} 
and the mechanism that mints and distributes these tokens to the relevant participants  
determines the  ``tokenomics policy'' of the blockchain. Designing such policies with
good properties is pivotal to ensuring the success of blockchain systems in the long term.

The token's value or \emph{price}, denominated in standard (fiat) currency, 
crucially determines the actual costs for users and the compensation for validators. 
An upward or downward fluctuation in the token's price can make the system less attractive for either type of party. While the system cannot directly control its token price in the market, it can implement various monetary policies (such as increasing token minting, burning transaction fees, adjusting the level of transaction fees, change the validator level of rewards, and others) to achieve a long-term equilibrium with the desired price without compromising the system's viability and decentralization.

This motivates the study of tokenomics design achieving the following important desiderata:  
\begin{enumerate}
	\item {\bf Viability}. Preconditioned on a positive system value, the system  keeps all involved parties actively engaged: validators to guarantee the protocol being live and securely running, and the users to guarantee enough fees are collected at all time.
	\item {\bf Decentralization and ``Skin-in-the-game.''} A set of validators each equally engaged in the protocol,  receiving adequate rewards to offset their costs, while also having significant stake in the system operation (e.g. in the case of proof-of-stake (PoS) systems, validators ``staking'' high enough amount of tokens).   
	\item {\bf Stability}. Reasonably stable token prices, meaning that the price of the token required to issue transactions does not significantly change over time. 
	\item {\bf Feasibility}. The tokenomics policy  should be feasible to implement on-chain algorithmically. In particular, 
 this means that the policy avoids using features that are incompatible or difficult to implement in the form of smart 
 contracts, integrate into the blockchain accounting model or
 they are antithetical to blockchain decentralization. 
For example, the policy should minimize the use of off-chain
entities that need to intervene actively to adjust policy (e.g., a central bank type of actor), 
or the use of techniques such as buybacks 
which would require a fiat reserve to facilitate. Furthermore, all the numerical quantities that are accounted in the ledger (e.g., the number of tokens held by a validator) should {\em not} grow exponentially over time. 

 \end{enumerate}
The conditions under which there is a tokenomics policy achieving \emph{all} these requirements is a fundamental question which we address in this work.

\subsection{Main contributions}

We explore tokenomics policies in the PoS setting, i.e., the setting where validators have to acquire tokens and stake them in order to receive rewards and perform the necessary system maintenance to further  the system's operation. 
We put forth a 
model where a set of $m$ users and $n$ validators engage with the system in discrete time steps. 
At each time step, users and validators buy or sell tokens in order to accommodate their objectives that include issuing transactions and staking tokens to provide the service. 

We give both boundary conditions and actual policies (mechanisms) that implement equilibria with all desired  features: viability, decentralization, stability, and feasibility. Specifically, our analysis leads us to put forward a novel generic mechanism
for blockchain ``monetary policy'': 
\begin{quote}
    {\bf Quantitative Rewarding (QR).}
The amount of rewards the system gives to the validators is suitably adjusted at each step to anticipate changes in the value of the service. 
\end{quote}

This idea stems from our analysis of equilibria that maintain a given price path (e.g., stable prices). Specifically: 

\begin{itemize}
	\item \emph{Prices and equilibria.} We show that,  in all equilibria implementing certain desired price paths, at every  round  (i) users spend a constant fraction of the total value of the offered service, and (ii) validators stake a constant fraction of the total rewards offered by the system (cf. Theorem~\ref{tH:prices_implies_generic}).
\end{itemize}

 A key observation here is that these two quantities are not necessarily related to each other, and the system can actually adjust the rewards in order to match changes in the value of the service  
 and control the prices to a desired path without directly participating in the token market (e.g., as it would be the case via token buybacks). 
Observe that if, at any point in time, the value of the service decreases, users will buy fewer tokens, thereby \emph{reducing demand} and potentially necessitating token buybacks if one is to keep prices stable. Rather than engaging in token buybacks, the key idea behind our QR mechanism is, perhaps somewhat counterintuitively, to \emph{increase the rewards} allocated to validators. As validators compete for the rewards, they react by staking more tokens, which they must be acquired from the market. This action effectively restores the demand for tokens to align with the supply. The reverse effect takes place when the service value increases. 

\begin{itemize}
    \item {\em Single Token Equilibria}. We investigate how to implement QR  via a  mechanism which adjusts the rewards at every round based on the current rewards and the next round service value using a single token for accounting. 
    An important challenge that arises in this setting is that the rewards formula is subject to a recursive condition (see the lower bound in Theorem~\ref{th:gen-symm-equilibrium-no-buy-back} and Corollary~\ref{cor:reward-growth-single}) 
    which makes rewards subject to a ``feedback loop'' effect that can lead to rewards exponential explosion. 
\end{itemize}

As a result,   while maintaining certain (minimal) rewards to ensure decentralization and security, it is possible that the system can
trigger an uncontrolled growth in the amount  of rewards distributed and in the staked amounts of tokens (Section~\ref{sec:dilemma-rewards-growth}).   
In Section~\ref{sec:implementation-single-token} we describe in detail the implementation of our mechanism for the case of stable prices. This implementation includes the strategies for users and validators, showing that all information needed is service value at the next round  (despite their simplicity, these strategies still constitute an equilibrium in the underlining repeated game, i.e., when more complex time-dependent strategies are in principle possible). 

The above results apply to the single token setting which is the most common in PoS systems (e.g., Ethereum and Cardano operate in this way, for example, as PoS cryptocurrencies)
and paint a rather negative picture: one has to either accept tokenomics with widely fluctuating token prices (the most common outcome that is observed in these systems as deployed in the real world), implement drastic measures such as token buybacks, or suffer a potential explosion in rewards accounting (unless future shocks in service value can be predicted --- a rather unreasonable assumption). 

Poised to obtain a more favorable outcome, we next explore the implementation of QR in two token PoS systems
in which one token is used to get the service and another token is used for staking (and governance). 
While less common, this type of setup has been adopted in a  few occasions: e.g.,  in the Neo cryptocurrency\footnote{See \url{https://docs.neo.org/docs/en-us/index.html}.} and can be implemented in other systems as well (e.g., in the Cosmos SDK\footnote{See \url{https://docs.cosmos.network/main/learn/beginner/gas-fees}.}). 
To investigate the potential of QR in this setting, we adapt  our model to the two tokens setting.
In order to maintain feasibility, 
while maintaining desirable prices, we identify the necessary conditions that equilibria (mechanism) must guarantee in terms of rewards:
\begin{itemize}
    \item \emph{Double Token Equilibria.} The two token mechanism
    for QR is presented in Section~\ref{sec:stable-price-mechanism-two}. The main advantage of the resulting rewards formula is that it is  simpler than in the single token setting and the mechanism does not need to know ``global information'' about the service value time series --- merely a one step lookahead suffices  (cf.   Theorem~\ref{th:generic-nobuy-back-tow}). As an important corollary, the mechanism achieves our objectives and is capable of avoiding the feedback loop effect of the single token case that leads to an exponential explosion for rewards. 
\end{itemize}

Intuitively, the advantage of decoupling the users from the validators, using different tokens, allows to better react and absorb shocks and fluctuations in the value of the service  while maintaining the price of the users' token stable and keeping all our feasibility desiderata. The price of the token used for staking increases gradually, while the amount of staked tokens remains constant, without scarifying incentive compatibility. 
This should be compared and contrasted with the single-token setting, which  
may enter a feedback loop and require an increase of the  validator rewards indefinitely along an exponential curve (even e.g.,  after  the service time series stabilizes) --- something that appears to be an inherent distinction between the single token and the two token setting in our model. 

\medskip 

\noindent {\bf Related work.} In the single-token setting, our results build on  the model of \cite{decentralization_friction}. Compared to that prior work, our results  accommodate a more general class of equilibria (see Appendix~\ref{app:eq:simpler} for details) with respect to  their features, that facilitate 
stable  prices without the requirement to employ  buybacks. 
 To the best of our knowledge, ours is the first analytical model for such systems to study their equilibria and related token price paths 
 achieving all four desiderata. Prior work includes the effects of tokens regarding user adoption and equilibrium selection \cite{bakos2019role, bakos2022overcoming,  li2018digital,sockin2023model}, or using token supply and buybacks to promote optimal growth and service provision~\cite{cong2021tokenomics, cong2022token}. Different aspects of the monetary characteristics of tokens have been studied in  \cite{schilling2019some, pagnotta2022decentralizing, prat2021fundamental}, while \cite{kiayias2023would} considered token burning; notably one can view QR as a generalization of the token burning mechanism that is more versatile, cf. \ref{remark:burning}. In prior work, typically users receive the most attention, but \cite{chitra2021competitive} expand on the strategy space of validators and their alternative uses for tokens. Further works modeling validators staking strategies are \cite{cong2022tokenomics,jermann2023macro,saleh2021blockchain,john2021equilibrium}. While all previous work has focused on the single token design, \cite{Dimitri2023}  considered two-token systems and proposed several economic indicators.

\section{Model description and notation (single token)}\label{sec:model}
\begin{figure}

\block{
\paragraph{Notation (single token model)}

\begin{itemize}
    \item $m$ = number of users (fixed and constant over time).
    \item $n$ = number of validators (fixed and constant over time).
    
    \item $\token_p$ = token holding of a generic player $p$ (user or validator).
    \item $\price$ = price of the token. 
    \item $\tm{f}$ the value of a generic quantity $f$ at time $t$.
    \item $\disc{f}$ = discounted version of quantity $\tm{f}$, that is, $\disc{f} = \sum_{t=0}^{\infty} \delta^t \tm{f}$. 
    \item $\discst{f}{t}$ = generalization of the previous definition in which we start at $t$ and discount accordingly, that is, $\discst{f}{t} = \sum_{\tau =t}^{\infty} \delta^{\tau - t} \TM{f}{\tau} = \TM{f}{t} + \delta\discst{f}{t+1}$.
    \item $\tm{s_j}$ = validator $j$'s selling strategy at time $t$ (negative values are possible, i.e., buying).
    \item $\tm{b_i}$ = user $i$'s buying strategy at time $t$ (negative values are possible, i.e., selling).
    \item $\tm{u_i}$ = amount of tokens that user $i$ pays for the service at time $t$. 
    \item $v$ = cost incurred by each validator (identical for all validators and all $t$).
    \item $\tm{R}$ is the total reward for validators at time $t$, which is further distributed to each validator according to a reward sharing scheme $r(\cdot)$, meaning that validator $j$ receives an amount of tokens equal to
    \begin{align}
        \label{eq:rss}
        \tm{R_j}  := \tm{R} \cdot  r(\tm{\token_j}, \tm{\token_{-j}}) \ . 
    \end{align}
    where $\tm{\token_{-j}}$ are the token holding of all validators but $j$.
    \item $\tm{S}$ is the service value at time $t$, which is further divided among the users according to some scheme $s(\cdot)$, meaning that each user $i$ receives a service value
    \begin{align}\label{eq:pay-scheme}
        \tm{S_i} := \tm{S} \cdot s(\tm{u_i}, \tm{u_{-i}}) \ . 
    \end{align}
    where $\tm{u_{-i}}$ are the used tokens  of all users but $i$.
    \item $g(n)$ = generic non-decreasing decentralization factor, where $n$ is the number of validators.
    \item $\tm{U_i}$ = instantaneous utility of user $i$, given by 
    \begin{align}\label{eq:usr-utility-istant}
        \tm{U_i} = \tm{S_i} \cdot g(n) - \tm{b_i} \cdot \tm{\price} \ . 
    \end{align}
    \item $\tm{V_j}$ = instantaneous utility of validator $j$, given by 
    \begin{align}\label{eq:val-utility-istant}
        \tm{V_j} = \tm{s_j} \cdot \tm{\price} - v \ . 
    \end{align}
\end{itemize}
}

  \caption{Notation and symbols.}
    \label{fig:notation}
\end{figure}
\noindent
We consider a Lagos-Wright \citep{lagos2005unified} type of model, in which there are two kinds of players: users and validators, as in \cite{kiayias2023would,decentralization_friction}. 
We have a repeated game where each round $t$,  consists of two phases or \emph{subrounds}. Players first interact with the system (user pay the system to get some service, and validators stake tokens to access rewards for performing some task),  and then they interact with a ``spot market'' (players buy or sell their tokens at the end of the round in order to hold the amount of tokens they need at the next round). In order to describe the model in detail, we introduce some notation in Figure~\ref{fig:notation}. The two subrounds are as follows:

\paragraph{Subround~I (pay \& stake):} Each user $i$ and each validator $j$ has $\tm{\token_i}$ and $\tm{\token_j}$ tokens, respectively.  Then,  given some (total) value of the service $\tm{S}$ and the total rewards $\tm{R}$ for validators, we have 
\begin{enumerate}
    \item Each validator $j$ performs some work, incurs a cost $v$, and receives an amount $\tm{R_j}$ of additional tokens according to \eqref{eq:rss}, given the total reward for validators $\tm{R}$.
    \item Each user $i$ uses some of her currently available tokens to pay for the service, and receives the corresponding service value $\tm{S}_i$ according to \eqref{eq:pay-scheme}, given the value of the service $\tm{S}$. 
\end{enumerate}
\paragraph{Subround~II (buy or sell):} Each user $i$ and each validator $j$ has $\tm{\token_i} - \tm{u_i}$ and $\tm{\token_j} + \tm{R_j}$ tokens, respectively.  Both type of players (users and validators) can buy any amount of new tokens or sell (part or all of the tokens they currently have) at the current market price. In detail:
\begin{enumerate}
    \item Each validator $j$ sells $\tm{s_j}$ tokens to the market and receives $\tm{s_j} \cdot \tm{\price}$ units of money in return. Note that we have a validator's selling constraint 
    \begin{align}\label{eq:val-sell-constraint}
       \tm{s_j} \leq \tm{\token_j} + \tm{R_j} \ . 
    \end{align}
    Any \emph{negative} $\tm{s_j}$ is allowed meaning that $j$ is actually \emph{buying} tokens from the market at the current price. 
    \item Each user $i$ buys $\tm{b_i}$ additional tokens from the market and pays $\tm{b_i} \cdot \tm{\price}$ units of money for that. Any \emph{negative} $\tm{b_i}$ is allowed, meaning that $i$ is \emph{selling} tokens to the market at the current price. Note that we have a users's selling constraint 
    \begin{align}\label{eq:usr-sell-constraint}
       -\tm{b_i} \leq \tm{\token_i} - \tm{u_i} \ . 
    \end{align}
\end{enumerate}
At the end of subround II of step $t$ we get the  token holdings for the next step, $t+1$, for each user $i$ and each validator $j$, respectively
\begin{align}\label{eq:token-holdings}
    \TM{\token_i}{t+1} = \tm{\token_i} - \tm{u_i} + \tm{b_i} \stackrel{\eqref{eq:usr-sell-constraint}}{\geq} 0 && \TM{\token_j}{t+1} = \tm{\token_j} + \tm{R_j} - \tm{s_j} \stackrel{\eqref{eq:val-sell-constraint}}{\geq} 0 \ . 
\end{align}
Each user $i$ and each validator $j$ aims at maximizing her own  \emph{discounted} utility, respectively 
\begin{align}
    \label{eq:usr-utility-discounted}
    \disc{U_i} & \stackrel{\eqref{eq:usr-utility-istant}}{=} \sum_{t=0}^\infty \delta^t \cdot [\tm{S_i} \cdot g(n) - \tm{b_i} \cdot \tm{\price}] \, 
    \\ 
    \label{eq:val-utility-discounted}
    \disc{V_j} & \stackrel{\eqref{eq:val-utility-istant}}{=} \sum_{t=0}^\infty \delta^t \cdot [\tm{s_j} \cdot \tm{\price} - v] 
\end{align}
subject to their respective selling constraints in \eqref{eq:usr-sell-constraint} and in  \eqref{eq:val-sell-constraint}. As usual, the discount factor $\delta$ above is $\delta=1/(1+r)$ where $r$ is the risk free rate.

\begin{definition}[symmetric equilibrium]\label{def:weakly-symmetric-equilibrium}
    Consider a system policy given by \eqref{eq:rss} and \eqref{eq:pay-scheme}, and a triple of users' strategies, validators' strategies, and  prices
    \begin{align*}
      (\tm{u_i}, \tm{b_i}) &&  \tm{s_j} &&  \tm{\price}
    \end{align*} 
    such that the corresponding selling constraints \eqref{eq:usr-sell-constraint} and \eqref{eq:val-sell-constraint} are satisfied. We say that such a triple is an \emph{equilibrium} if 
    \begin{enumerate}
        \item  Strategy $\tm{s_j}$ maximizes the discounted utility \eqref{eq:val-utility-discounted} of validator $j$,  given all other strategies, for each validator $j$;
        \item Strategy $(\tm{u_i}, \tm{b_i})$ maximizes the discounted utility \eqref{eq:usr-utility-discounted} of user $i$,  given all other strategies, for each user $i$;
        \item The resulting two sequences of token holdings \eqref{eq:token-holdings} are strictly positive, that is, $\tm{\token_i}>0$ and $\tm{\token_j}>0$.
    \end{enumerate}
    Moreover, we say that such an equilibrium is \emph{ symmetric} if the token holdings of all users are the same, and similarly, if all token holdings of all validators are the same,  correspond to sequences of strictly positive token holdings
    \begin{align}\label{eq:symmetric--token-holdins}
        \tm{\token_i}= \tm{\token_U}>0 && \tm{\token_j} = \tm{\token_V}>0 \ . 
    \end{align}
    for all users $i$ and for all validators $j$ and for all $t$.
\end{definition}

\begin{remark}[viability and equilibria]\label{rem:positive-token-holding}
The condition that tokens holding must be strictly positive at all time steps implies that the selling/buying strategies at such an equilibrium satisfy the selling constraints \eqref{eq:usr-sell-constraint} and \eqref{eq:val-sell-constraint}  with a strict inequality. This condition captures our \emph{viability} requirement of keeping all parties engaged. 
\end{remark}

In order to establish the existence of  equilibria, we need a few (technical) assumptions, regarding the service value (Assumption~\ref{asm:service-level}) and about the rewards and service sharing functions (Assumption~\ref{asm:schemes} below).   

\begin{assumption}[service value]\label{asm:service-level}
The service value $\tm{S}$ is upper bounded by some (arbitrarily large) constant independent of $t$.   
\end{assumption}
The value of the constant in Assumption~\ref{asm:service-level} does not affect the results and bounds in any way, and is only needed 
to establish the existence of equilibria. 
Other than this, we make no other assumptions, and allow $\tm{S}$ to increase or  decrease arbitrarily within this range.

\begin{remark}[service value  fluctuations -- shocks]\label{rem:shocks}
    The  service value $\tm{S}$ represents the overall value the system provides to users. Several factors (technological improvements, competitors, regulations, market conditions) influence $\tm{S}$. Consequently, the system may neither fully control nor predict all future values of $\tm{S}$, as these factors can cause unforeseen shocks. However, the system can attempt to counteract changes in $\tm{S}$ by adjusting other parameters, such as the rewards $\tm{R}$.    
\end{remark}
\section{Single token mechanisms}\label{sec:single-token}
\subsection{Analysis of price path}\label{sec:result}
In this section we provide the analysis of the  prices for the following family of service allocation and rewards:
\begin{align}\label{eq:general-schemes}
   \tm{S_i} = \tm{S}\cdot s\left(\frac{\tm{u_i}}{\sum_{k} \tm{u_k}}\right) && \tm{R_j} = \tm{R} \cdot r\left(\frac{\tm{\token_j}}{\sum_v \tm{\token_v}} \right) 
\end{align}
where $s(\cdot)$ and $r(\cdot)$ are arbitrary differentiable functions in $(0,1)$, and where the indexes $k$ and $v$ in the summations range over all users and all validators, respectively. In the following, we denote by $r'(\cdot)$ and $s'(\cdot)$  the first derivatives of $r(\cdot)$ and $s(\cdot)$, respectively. We make the following assumption about $r(\cdot)$ and $s(\cdot)$.

\begin{assumption}\label{asm:schemes}
    We assume that the functions $r(\cdot)$ and $s(\cdot)$ in \eqref{eq:general-schemes} are both concave.  Moreover, for the number $m\geq 2$ and $n\geq 2$ of users and validators under consideration, they never allocate more than the total rewards or service value available, $r(1/n)\leq 1/n$ and  $s(1/m)\leq 1/m$,  and the first derivatives satisfy $r'(1/n)>0$ and $s'(1/m)>0$. 
\end{assumption}

\begin{lemma}\label{le:prices-general}
    In any  symmetric equilibrium, it holds that
    \begin{align}
    \label{eq:val-price}
        \TM{\price}{t-1}  = \tm{\price} \cdot \delta \cdot  \tm{\Rew} \ ,   &&  \tm{\Rew} =1 + \frac{\tm{R}}{n\tm{\token_V}}\cdot \frac{n-1}{n} \cdot r' (\frac{1}{n})\ . 
    \end{align}
    Moreover 
    \begin{align}
    \label{eq:usr-price}
        \TM{\price}{t-1} = \delta \cdot \begin{cases}
        \tm{\Ser} & \text{if  } \tm{u_i} = \tm{\token_U} \\
        \tm{\price} & \text{if  } \tm{u_i} < \tm{\token_U} 
    \end{cases} \ ,
    && \tm{\Ser} = \frac{\tm{S}}{m\tm{\token_U}} \cdot g(n) \cdot \frac{m-1}{m}\cdot s'(\frac{1}{m}) \ . 
    \end{align}
    Hence, if $\tm{\Rew} \neq 1$, then 
$\tm{u_i} = \tm{\token_U}$ and $\tm{b_i}=\TM{\token_U}{t+1}$ for all users $i$, and the following identity must hold:
\begin{align}
\label{eq:usr-val-connection}
\tm{\price}\tm{\Rew} = \tm{\Ser} \ . 
\end{align}   
\end{lemma}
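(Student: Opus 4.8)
The plan is to read off \eqref{eq:val-price} and \eqref{eq:usr-price} from the first-order optimality conditions of, respectively, a single validator and a single user (with the price path and every other player's strategy held fixed), evaluated at the symmetric profile \eqref{eq:symmetric--token-holdins}, and then to combine the two to obtain \eqref{eq:usr-val-connection}; this is the argument of \cite{decentralization_friction} adapted to the general schemes \eqref{eq:general-schemes}. Two facts are used throughout: by Remark~\ref{rem:positive-token-holding} the selling constraints \eqref{eq:val-sell-constraint} and \eqref{eq:usr-sell-constraint} are strict at equilibrium, so they never bind; and concavity of $r$ and $s$ (Assumption~\ref{asm:schemes}) makes the stationarity conditions below sufficient for a maximum.

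\emph{Validators.} Fix a validator $j$ and treat the prices $\{\TM{\price}{\tau}\}$ and the holdings of the other validators (all equal to $\TM{\token_V}{\tau}$) as given. I would use the token-holding sequence $\{\TM{\token_j}{\tau}\}_{\tau\ge1}$ as the decision variables, so that $\TM{s_j}{\tau}=\TM{\token_j}{\tau}+\TM{R_j}{\tau}-\TM{\token_j}{\tau+1}$ by \eqref{eq:token-holdings} with $\TM{R_j}{\tau}=\TM{R}{\tau}\, r(\TM{\token_j}{\tau}/\sum_v\TM{\token_v}{\tau})$. Substituted into \eqref{eq:val-utility-discounted}, the variable $\TM{\token_j}{t}$ (for each $t\ge1$) appears only in the summands at times $t-1$ and $t$ and, the selling constraints being slack, may be moved in either direction; setting the derivative of $\disc{V_j}$ in $\TM{\token_j}{t}$ to zero gives $\delta^{t}\TM{\price}{t}(1+\partial\TM{R_j}{t}/\partial\TM{\token_j}{t})=\delta^{t-1}\TM{\price}{t-1}$. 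Evaluating $\partial\TM{R_j}{t}/\partial\TM{\token_j}{t}$ at $\TM{\token_j}{t}=\TM{\token_V}{t}$, using $\frac{d}{dx}\frac{x}{x+c}=\frac{c}{(x+c)^2}$ with $c=(n-1)\TM{\token_V}{t}$, yields $\frac{\TM{R}{t}}{n\TM{\token_V}{t}}\cdot\frac{n-1}{n}\cdot r'(1/n)=\tm{\Rew}-1$, which is \eqref{eq:val-price}.

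\emph{Users.} Fix a user $i$ and take as decision variables $\{\TM{\token_i}{\tau}\}_{\tau\ge1}$ and $\{\TM{u_i}{\tau}\}$, so that $\TM{b_i}{\tau}=\TM{\token_i}{\tau+1}-\TM{\token_i}{\tau}+\TM{u_i}{\tau}$ and $\TM{S_i}{\tau}=\TM{S}{\tau}\, s(\TM{u_i}{\tau}/\sum_k\TM{u_k}{\tau})$. The new ingredient relative to the validators is the constraint $\TM{u_i}{t}\le\TM{\token_i}{t}$ (the service is paid only with tokens already held), and it is exactly this constraint that produces the case distinction in \eqref{eq:usr-price}. When $\TM{u_i}{t}<\TM{\token_U}{t}$ the constraint is slack, $\TM{\token_i}{t}$ can be perturbed alone, it enters only the buying terms at times $t-1$ and $t$, and stationarity gives $\delta^{t-1}\TM{\price}{t-1}=\delta^{t}\TM{\price}{t}$ --- the second branch. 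When $\TM{u_i}{t}=\TM{\token_i}{t}=\TM{\token_U}{t}$ the only feasible move is the joint perturbation $\TM{\token_i}{t}\mapsto\TM{\token_i}{t}+\eta$, $\TM{u_i}{t}\mapsto\TM{u_i}{t}+\eta$ (which automatically changes $\TM{b_i}{t-1}$ by $\eta$, leaves $\TM{b_i}{t}$ unchanged, keeps the constraint tight, and is admissible for small $\eta$ of either sign since buying is unconstrained); optimality in both directions then forces $\delta^{t-1}\TM{\price}{t-1}=\delta^{t}\, g(n)\,\partial\TM{S_i}{t}/\partial\TM{u_i}{t}$, and evaluating the last derivative at the symmetric point exactly as for the validators (with $m$ in place of $n$) gives $\tm{\Ser}$ --- the first branch.

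\emph{Combining, and the expected obstacle.} If $\tm{\Rew}\neq1$, then \eqref{eq:val-price} reads $\TM{\price}{t-1}=\delta\,\tm{\price}\,\tm{\Rew}\neq\delta\,\tm{\price}$ (prices being positive), so the second branch of \eqref{eq:usr-price} is impossible; hence every user has $\TM{u_i}{t}=\TM{\token_U}{t}$, and since $\TM{\token_i}{t}=\TM{\token_U}{t}$ this gives $\TM{b_i}{t}=\TM{\token_i}{t+1}=\TM{\token_U}{t+1}$, while equating \eqref{eq:val-price} with the first branch of \eqref{eq:usr-price} yields $\tm{\price}\,\tm{\Rew}=\tm{\Ser}$, i.e.\ \eqref{eq:usr-val-connection}. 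I expect the delicate step to be precisely the user side: pinning down the unique feasible perturbation in the binding case and verifying it is admissible on both sides (so the one-sided KKT inequality can be upgraded to the stated equality), together with the bookkeeping of which periods each holding variable enters and the appeal to concavity to go from stationarity to genuine optimality. Everything else is routine calculus at the symmetric point.
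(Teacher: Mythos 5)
Your proposal is correct and follows essentially the same route as the paper: first-order conditions for a single validator and a single user at the symmetric profile (using that the selling constraints are slack by Remark~\ref{rem:positive-token-holding}), with the user-side case split driven by whether $\tm{u_i}=\tm{\token_U}$, and then the two price equations combined to rule out the second branch when $\tm{\Rew}\neq 1$. The only difference is presentational: you differentiate the discounted-sum (sequence) problem directly to get the Euler equations, whereas the paper phrases the identical conditions via the Principle of Optimality and envelope derivatives of the value functions $\VE_j^t$ and $\UE_i^t$.
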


A simple class of schemes satisfying our assumptions is the following one:
\begin{align}\label{eq:simpler-schemes}
    \tm{S_i} = \tm{S} \cdot \left(\frac{\tm{u_i}}{\sum_{k} \tm{u_k}}\right) && \tm{R_j} = \tm{R} \cdot \left(\frac{\tm{\token_j}}{\sum_v \tm{\token_v}} \right) \ .
\end{align}

Example~\ref{ex:scheme-alternative} below provides another possible reward sharing function $r(\cdot)$ other than the one in \eqref{eq:simpler-schemes} -- see also Appendix~\ref{sec:special rewards} for further examples. We do not claim that this alternative $r(\cdot)$ provides any particular improvement, but simply point out that
changing $r(\cdot)$, and similarly $s(\cdot)$,  does affect the equilibria as described by   Lemma~\ref{le:prices-general} (see Appendix~\ref{app:eq:simpler} for details and further discussion).

\begin{example}\label{ex:scheme-alternative}
    For any parameter $\ell>1$, the following  reward sharing function $r(x) = x - x^\ell$ satisfies $r(1/n) = 1/n - 1/ n^\ell >0$, meaning that for smaller number of validators a smaller fraction of the total allocated rewards is actually distributed. Since $r'(1/n) = 1 - \ell/n^{\ell -1}$,  we have $r'(1/n)> 0 $ for sufficiently large $n$. 
\end{example}

\begin{remark}
Assumption~\ref{asm:schemes} implies that  $\tm{\Rew}>1$ whenever $\tm{R}>0$, thus implying that the largest possible price growth  must satisfy $\tm{\price} < \TM{\price}{t-1}/\delta = (1+r) \cdot \TM{\price}{t-1}$  where $\delta=1/(1+r)$ and $r$ is the risk-free rate. Intuitively, the return rate for just holding the token cannot beat the risk-free rate, as one might expect when looking for equilibria that keep all users and validators engaged (viability). Whether blockchains can achieve return rates competitive with the inflation is studied in \cite{felez2021better} in relation to policies adopted by some of the current blockchains, though without providing analytical results. 
\end{remark}

\subsection{Generic symmetric equilibria}
In the following we focus on the interesting case of prices having a constant multiplicative growth, which includes stable prices as a special case. 

\begin{definition}[$\gamma$-stable prices]\label{def:prices-gamma-stable} We say that the prices (of some equilibrium under consideration) are $\gamma$-stable if, 
for all $t\geq 1$, they satisfy
\begin{align}\label{eq:price-growth}
	\tm{\price} = \frac{1}{\gamma} \cdot \TM{\price}{t-1} && \gamma \neq \delta , \ \gamma >0\ . 
\end{align}
\end{definition}
Note that stable prices satisfy the above condition with $\gamma=1$, while $\gamma>1$ and $\gamma<1$ correspond to decreasing and increasing prices, respectively. 
\newcommand{\sertofee}{\mathrm{Ser2Fees}}
\newcommand{\rewtostake}{\mathrm{Rew2Stake}}

We next define a class of generic symmetric equilibria which, as we prove below, captures all symmetric equilibria with $\gamma$-stable prices.

\begin{definition}[generic symmetric equilibrium]\label{def:symm-eq-general}
	A symmetric equilibrium is generic if the following conditions hold for all $t\geq 1$ and constants $\kappa_S, \kappa_R,\rewtostake, \sertofee\in\mathbb{R}$: 
	\begin{enumerate}
		\item \label{def:symm-general-serfees} The monetary amount of tokens that users hold (and use) is proportional to the value of the service  offered by the system. That is, the  service to fees ratio is constant,
		\begin{align*}
			\tm{\sertofee} := \frac{\tm{S}}{m\tm{\token_U}\cdot \tm{\price}} = \sertofee\ .  
		\end{align*}
	\item \label{def:symm-general-rewstake}  The  amount of tokens staked by validators  is proportional to the rewards offered by the system. That is, the rewards to stake ratio is constant, 
	\begin{align*}
		\tm{\rewtostake} := \frac{\tm{R}}{n\tm{\token_V}} = \rewtostake\ .  
	\end{align*}
	\item The prices satisfy 
	\begin{align*}
		\TM{\price}{t-1} 	=&  \delta \cdot \tm{\price}\cdot (1 + \rewtostake\cdot \kappa_R) 
		\\ =&  \delta \cdot \tm{\price}\cdot \sertofee\cdot \kappa_S 
	\end{align*}
where constants $\kappa_R$ and $\kappa_S$ depend only on the number of users $m$,  the number of validators $n$, the rewards sharing scheme, and on the service fee scheme. 
		\item Each user $i$ starts round $t$ with the same token holding $\tm{\token_U}$, uses all its token current holding  for the service  ($\tm{u_i} = \tm{\token_U}$),  and buys new tokens needed for the next round accordingly   ($\tm{b_i}=\TM{\token_U}{t+1}$).
	\end{enumerate}
\end{definition}

Note that generic symmetric equilibria provide additional structure to the definition of symmetric equilibria.
The following theorem says that, when considering $\gamma$-stable prices (Definition~\ref{def:prices-gamma-stable}), we can restrict ourselves to generic symmetric equilibria without loss of generality. In particular, this holds true   for stable prices.

\begin{theorem}\label{tH:prices_implies_generic}
	Any symmetric equilibrium for the reward and service fee schemes in \eqref{eq:general-schemes}  with $\gamma$-stable prices  is a generic symmetric equilibrium with constants 
 \begin{align*}
     \kappa_R = \frac{n-1}{n} \cdot r'(1/n) && \text{and} &&  \kappa_S = g(n) \cdot \frac{m-1}{m} \cdot s'(1/m) \ . 
 \end{align*}
 This in particular holds true for the case of stable prices. 
\end{theorem}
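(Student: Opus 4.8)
The plan is to show that each of the four defining conditions of a generic symmetric equilibrium follows from Lemma~\ref{le:prices-general} together with the price path assumption \eqref{eq:price-growth}. First I would invoke Lemma~\ref{le:prices-general}, which already gives the price recursion $\TM{\price}{t-1} = \tm{\price}\cdot\delta\cdot\tm{\Rew}$ with $\tm{\Rew} = 1 + \frac{\tm{R}}{n\tm{\token_V}}\cdot\frac{n-1}{n}\cdot r'(1/n)$. Combining this with \eqref{eq:price-growth}, i.e. $\TM{\price}{t-1} = \tm{\price}/\gamma$, and cancelling the (strictly positive) factor $\tm{\price}$, we get $\frac{1}{\gamma} = \delta\cdot\tm{\Rew}$, so $\tm{\Rew}$ is the constant $\frac{1}{\delta\gamma}$, independent of $t$. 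Since $\gamma\neq\delta$ forces $\frac{1}{\delta\gamma}\neq 1$, we have $\tm{\Rew}\neq 1$; hence the last part of Lemma~\ref{le:prices-general} applies and gives $\tm{u_i}=\tm{\token_U}$ and $\tm{b_i}=\TM{\token_U}{t+1}$ for all users $i$, which is condition~4. Because $\tm{\Rew}$ is constant and $r'(1/n)$ is a fixed nonzero constant, the equation $\tm{\Rew}=1+\rewtostake\cdot\kappa_R$ with $\kappa_R = \frac{n-1}{n}r'(1/n)$ shows $\frac{\tm{R}}{n\tm{\token_V}} = \tm{\rewtostake}$ is constant — this is condition~2 and simultaneously pins down the first line of condition~3.

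Next I would treat the users' side symmetrically. Since condition~4 gives $\tm{u_i}=\tm{\token_U}$, the first branch of \eqref{eq:usr-price} applies: $\TM{\price}{t-1} = \delta\cdot\tm{\Ser}$ with $\tm{\Ser} = \frac{\tm{S}}{m\tm{\token_U}}\cdot g(n)\cdot\frac{m-1}{m}\cdot s'(1/m)$. Using \eqref{eq:price-growth} again we get $\frac{1}{\gamma} = \delta\cdot\frac{\tm{\Ser}}{\tm{\price}}$, i.e. $\frac{\tm{\Ser}}{\tm{\price}}$ is the constant $\frac{1}{\delta\gamma}$. Pulling $\tm{\price}$ into the denominator of $\tm{\Ser}$, the quantity $\frac{\tm{S}}{m\tm{\token_U}\cdot\tm{\price}} = \tm{\sertofee}$ times the fixed constant $g(n)\frac{m-1}{m}s'(1/m)$ equals $\frac{1}{\delta\gamma}$, so $\tm{\sertofee}$ is constant — condition~1, with $\kappa_S = g(n)\cdot\frac{m-1}{m}\cdot s'(1/m)$. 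This also yields the second line of condition~3, $\TM{\price}{t-1} = \delta\cdot\tm{\price}\cdot\sertofee\cdot\kappa_S$; equivalently it follows from identity \eqref{eq:usr-val-connection}, $\tm{\price}\tm{\Rew}=\tm{\Ser}$, which guarantees the two expressions for $\TM{\price}{t-1}$ in condition~3 agree. Finally, the stable-price case is just $\gamma=1$, which indeed satisfies $\gamma\neq\delta$ whenever $\delta<1$, so the conclusion specializes as claimed.

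I do not anticipate a serious obstacle: the argument is essentially bookkeeping, reorganizing the content of Lemma~\ref{le:prices-general} under the extra hypothesis \eqref{eq:price-growth}. The one point requiring a little care is verifying that $\tm{\Rew}\neq 1$ — this is exactly where the hypothesis $\gamma\neq\delta$ is used, and it is what unlocks the "if $\tm{\Rew}\neq 1$" clause of Lemma~\ref{le:prices-general} that delivers condition~4 (every user spending their entire holding). Note there is a minor typo to flag: the theorem statement writes $s'(1/n)$ in $\kappa_S$, whereas Lemma~\ref{le:prices-general} and Assumption~\ref{asm:schemes} use $s'(1/m)$; the proof should carry $s'(1/m)$. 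A secondary point worth a sentence is that Assumption~\ref{asm:schemes} ($r'(1/n)>0$, $s'(1/m)>0$, and $\tm{R}>0$, $\tm{S}>0$ in a viable equilibrium) ensures $\kappa_R,\kappa_S$ are genuine positive constants and the ratios $\tm{\rewtostake}$, $\tm{\sertofee}$ are well-defined and positive, consistent with Definition~\ref{def:symm-eq-general}.
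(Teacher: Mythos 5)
Your route is the same as the paper's: the official proof is just the observation that \eqref{eq:price-growth} together with Lemma~\ref{le:prices-general} forces $\delta\tm{\Rew}=\gamma$, and then everything (constancy of $\tm{\rewtostake}$ and $\tm{\sertofee}$, condition~4 via the $\tm{\Rew}\neq 1$ clause, and the two price expressions agreeing via \eqref{eq:usr-val-connection}) is bookkeeping — exactly the bookkeeping you spell out. Your remark about the $s'(1/n)$ vs.\ $s'(1/m)$ typo in $\kappa_S$ is also correct.

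However, you inverted the price-path relation, and this makes one written step false as stated. Equation \eqref{eq:price-growth} says $\tm{\price}=\TM{\price}{t-1}/\gamma$, i.e.\ $\TM{\price}{t-1}=\gamma\cdot\tm{\price}$, not $\TM{\price}{t-1}=\tm{\price}/\gamma$. Combining with Lemma~\ref{le:prices-general} therefore gives $\tm{\Rew}=\gamma/\delta$ (the paper's $\delta\tm{\Rew}=\gamma$), and likewise $\tm{\Ser}/\tm{\price}=\gamma/\delta$, not $1/(\delta\gamma)$. With your inverted constant, the inference ``$\gamma\neq\delta$ forces $\tfrac{1}{\delta\gamma}\neq 1$'' does not hold (take $\gamma=1/\delta\neq\delta$), so the clause of Lemma~\ref{le:prices-general} delivering condition~4 would not be unlocked by the stated hypothesis; with the correct constant $\gamma/\delta$, the inference $\gamma\neq\delta\Rightarrow\tm{\Rew}\neq 1$ is immediate and is precisely where the hypothesis is used. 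In the stable-price case $\gamma=1$ the two readings coincide ($\tm{\Rew}=1/\delta$), so your final specialization is unaffected, but for general $\gamma$ you should fix the direction; once fixed, the argument is complete and matches the paper's.
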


 Note that stable prices require the following \emph{specific} constants in the two ratios involving the tokens:\begin{align}\label{eq:constants-stable-case}
    \sertofee = 1/(\delta \kappa_S) && \rewtostake = (1 - \delta)/ (\delta\kappa_R)
\end{align}
thus implying that it must hold $\kappa_R>0$ and $\kappa_S>0$. Theorem~\ref{tH:prices_implies_generic} then implies the following. 

\begin{corollary}\label{cor:stable-price-equilibria}
Any generic symmetric equilibrium (and thus any symmetric equilibrium for the reward and service fee schemes in \eqref{eq:general-schemes}) with stable prices must have the following token holdings:
\begin{align*}
%\label{eq:token-holding-stable-price-equilibria}
    \tm{\token_U} = \frac{\tm{S}}{m} \cdot \delta \cdot \kappa_S > 0 && \text{and} && \tm{\token_V} = \frac{\tm{R}}{n} \cdot \frac{\delta}{1 - \delta} \cdot \kappa_R > 0 
\end{align*}
for any nonnegative $\tm{S}$ and $\tm{R}$.
\end{corollary}

\begin{remark}[higher rewards improve PoS security]\label{rem:rewards-PoS-security}  Security of PoS protocols relies on the total amount of  staked tokens -- the higher, the better, as an attacker needs at least a fraction of this amount to succeed.  The results above and Item~\ref{def:symm-general-rewstake} in Definition~\ref{def:symm-eq-general} indicate that the amount of staked tokens at equilibrium is \emph{proportional} to the rewards. Therefore, increasing rewards improves the system's security, a strategy even suggested in some implementations, such as Polkadot \cite{burdges2020overview}.  
    
\end{remark}

\subsection{The implications of no buybacks and Quantative Rewarding}\label{sec:nobuyback}
In the model considered so far, the system is allowed to buy or to sell the additional tokens required during each subround~II to match demand with supply. In this section, we refine a concept of equilibrium, by requiring that the system  does not have to buyback tokens (and perhaps does not sell tokens either if demand and supply match at equilibrium). We stress that this refers only to the ``spot market'' (subround~II) where tokens are exchanged for \emph{money}. Indeed, these Lagos-Wright ``two-stage'' models have a simple implementation (subround~I): The system either burns some fees (when rewards are less than the paid fees) or mints additional new tokens for rewards (when rewards are more than paid fees). Note that  \emph{no monetary reserve} is needed for these operations.

\begin{definition}[no buyback]\label{def:nobuysell}
	We say that a symmetric  equilibrium satisfies the no buyback condition if no additional tokens are bought at any round by the system, that is,   $m\tm{b_U}\geq n\tm{s_V}$ for all $t$. Additionally, we say that the no buyback condition holds tightly if demand matches supply, that is, $m\tm{b_U} = n\tm{s_V}$. 
\end{definition}

Note that the quantities $m \cdot b_U$ and $n \cdot s_V$ correspond to the token \emph{demand} and \emph{supply} in the ``market'' subround~II, respectively. As we formally prove below, the demand and supply are determined by the service value and rewards, respectively:
\begin{itemize}
    \item \emph{Demand} ($m  \tm{b_U}$ from users) decreases if the next round's \emph{service value} decreases (Equation~\eqref{eq:demand} below).
    \item \emph{Supply} ($n  \tm{s_V}$ from validators) decreases with the next round's \emph{rewards} (Equation~\eqref{eq:supply} below). 
\end{itemize}
This naturally suggests the following \emph{quantitative rewarding} approach to avoid buybacks:
\begin{enumerate}
    \item (\emph{demand decreases $\Rightarrow$ increase rewards}.) Whenever the next round's service value ``decreases significantly,'' in order to adjust supply to the decreased demand, we \emph{increase} the next round's rewards. This will make rewards more attractive for the validators, who then stake more tokens (and thus sell less on the market). 
    \item (\emph{demand increases $\Rightarrow$ decrease rewards}.) Whenever the next round's service value ``increases significantly,'' we can \emph{decrease} the next round's rewards. This will have the opposite effect on validators, who will sell more tokens on the market. 
\end{enumerate}
The first case is necessary to avoid buybacks (satisfy Definition~\ref{def:nobuysell}). The second operation turns out to be useful for (i) restoring payments to a lower value when possible and (ii) satisfying the no buybacks condition \emph{tightly}, meaning that demand never exceeds supply ($m  b_U = n  s_V$). This stronger condition has the advantage that the system does not even need to implement a minting mechanism for users to buy tokens (see Remark~\ref{rem:tight-nobuyback} below).

The following theorem  establishes an equivalence between the no buybacks condition and the above quantitative rewarding (rewards increase). In particular, it quantifies the minimum rewards for which it is possible to avoid buybacks at a given round. 

\begin{theorem}\label{th:gen-symm-equilibrium-no-buy-back}
	In any generic symmetric equilibrium, the no buyback condition holds at round $t$ if and only if the rewards satisfy the following condition 
	\begin{align}\label{eq:th:gen-symm-equilibrium-no-buy-back}
		\TM{R}{t+1} \geq \tm{R}\cdot(1+a)   - b \cdot \TM{S}{t+1} \cdot (\delta \Rew)^{t+1}
	\end{align}
where 
\begin{align*}
	a = \rewtostake \cdot n \cdot r(1/n)\ ,  && b = \frac{\rewtostake}{\sertofee}\cdot \frac{1}{\TM{\price}{0}}\ ,  && \Rew = 1+\rewtostake\cdot \kappa_R \ . 
\end{align*}
\end{theorem}
\begin{proof}
	Let us first observe that the total amount of  tokens bought by the users (demand) is 
	\begin{align}\label{eq:demand}
		m\tm{b_U} = m\TM{\token_U}{t+1} = \frac{\TM{S}{t+1}}{\TM{\price}{t+1}\cdot \sertofee} = \frac{\TM{S}{t+1}\cdot (\delta \Rew)^{t+1}}{\TM{\price}{0}\cdot \sertofee}
	\end{align}
where $\Rew = 1+\rewtostake\cdot \kappa_R$. 
The total amount of tokens sold by the validators (supply) is 
\begin{align}\label{eq:supply}
	n\tm{s_V} & =   n \left(\tm{\token_V} - \TM{\token_V}{t+1} + \tm{R}\cdot r(1/n)\right)  = \frac{\tm{R}}{\rewtostake}  - \frac{\TM{R}{t+1}}{\rewtostake}  + \tm{R} \cdot n \cdot r(1/n) \ . 
\end{align}
Hence, the no buyback condition $m\tm{b_U}\geq n\tm{s_V}$ is equivalent to
\begin{align*}
	\frac{\tm{R}}{\rewtostake}  - \frac{\TM{R}{t+1}}{\rewtostake}  + \tm{R} \cdot n \cdot r(1/n) \leq  \frac{\TM{S}{t+1}\cdot (\delta \Rew)^{t+1}}{\TM{\price}{0}\cdot \sertofee}
\end{align*}
that is 
\begin{align*}
	\tm{R}  - \TM{R}{t+1} + \tm{R} \cdot n \cdot r(1/n) \cdot \rewtostake \leq  \frac{\rewtostake}{\TM{\price}{0}\cdot \sertofee} \cdot  \TM{S}{t+1}\cdot (\delta \Rew)^{t+1}\ . 
\end{align*}
By rearranging the terms, the theorem follows. 
\end{proof}

The above theorem provides a recursive (algorithmic) formula for the rewards, which leads to our mechanism described in Section~\ref{sec:implementation-single-token} below. By unfolding the recursion \eqref{eq:th:gen-symm-equilibrium-no-buy-back} in  same theorem, we get the following bound on the growth of the rewards necessary to guarantee the no buyback. 
\begin{corollary}\label{cor:reward-growth-single}
    The minimal rewards satisfying the no buyback condition (tightly) are equal to 
    \begin{align*}
        \TM{\Roptsingle}{t+1} = (1 + a)^t \cdot \left(\TM{R}{0} - b \sum_{\tau=1}^{t} \TM{S}\tau{}\cdot \left(\frac{\delta\Rew}{1+a}\right)^\tau \right)\ ,  && t \geq 1 
    \end{align*}
    for any initial reward $\TM{R}{0}$, and for constants $a$, $b$, and $\Rew$ as in Theorem~\ref{th:gen-symm-equilibrium-no-buy-back}.
\end{corollary}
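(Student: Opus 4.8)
The plan is to solve the linear recurrence for the rewards given in Theorem~\ref{th:gen-symm-equilibrium-no-buy-back}. The theorem states that the no buy back condition at round $t$ is equivalent to the inequality $\TM{R}{t+1} \geq \tm{R}\cdot(1+a) - b\cdot\TM{S}{t+1}\cdot(\delta\Rew)^{t+1}$; hence the \emph{minimal} reward sequence satisfying no buy back tightly at every round is the one where this holds with equality for all $t\geq 0$, i.e. $\TM{R}{t+1} = (1+a)\cdot\TM{R}{t} - b\cdot\TM{S}{t+1}\cdot(\delta\Rew)^{t+1}$. So the entire content of the corollary is to find the closed form of this first-order linear recurrence with variable (inhomogeneous) forcing term, starting from the given $\TM{R}{0}$.

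First I would unroll the recurrence by induction on $t$. Writing $c_\tau := b\cdot\TM{S}{\tau}\cdot(\delta\Rew)^{\tau}$ for the forcing term at step $\tau$, the recurrence is $\TM{R}{t+1} = (1+a)\TM{R}{t} - c_{t+1}$, whose standard solution is $\TM{R}{t} = (1+a)^t \TM{R}{0} - \sum_{\tau=1}^{t}(1+a)^{t-\tau} c_\tau$. Substituting back $c_\tau = b\cdot\TM{S}{\tau}\cdot(\delta\Rew)^{\tau}$ and factoring $(1+a)^t$ out of the sum gives
\begin{align*}
\TM{R}{t} = (1+a)^t\left(\TM{R}{0} - b\sum_{\tau=1}^{t}\TM{S}{\tau}\cdot\left(\frac{\delta\Rew}{1+a}\right)^{\tau}\right),
\end{align*}
which, after shifting the index ($t \mapsto t+1$ to match the statement's $\TM{\Roptsingle}{t+1}$ formulation), is exactly the claimed expression. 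I would present this as a one-line induction: the base case $t=1$ is immediate from the tight inequality at round $0$, and the inductive step just plugs the hypothesis for $\TM{R}{t}$ into $\TM{R}{t+1} = (1+a)\TM{R}{t} - c_{t+1}$ and re-factors.

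There is essentially no hard obstacle here — this is a routine solution of a linear recurrence — but two small points deserve care. First, one must justify that ``minimal rewards satisfying the no buy back tightly'' means taking equality in the theorem's inequality at \emph{every} round simultaneously, and check that this is consistent (it is, since the recurrence determines $\TM{R}{t+1}$ uniquely from $\TM{R}{t}$ and the exogenous $\TM{S}{t+1}$, so there is no circularity and no over-determination). Second, one should be careful with the index bookkeeping between the theorem's statement (indexed by the round $t$ at which no buy back is evaluated, producing a constraint on $\TM{R}{t+1}$) and the corollary's statement (a closed form for $\TM{\Roptsingle}{t+1}$ valid for $t\geq 1$); getting the summation range $\tau = 1,\dots,t$ and the exponents to line up is the only place a sign or off-by-one slip could creep in. I would double-check the $t=1$ and $t=2$ cases explicitly against both the recurrence and the closed form before declaring victory.
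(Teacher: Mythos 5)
Your method is the same as the paper's: the corollary is stated there without a separate proof, as the immediate consequence of taking the inequality of Theorem~\ref{th:gen-symm-equilibrium-no-buy-back} with equality at every round and unrolling the resulting first-order linear recurrence, which is precisely what you do. Your general solution $\TM{R}{t} = (1+a)^{t}\bigl(\TM{R}{0} - b\sum_{\tau=1}^{t}\TM{S}{\tau}\,(\delta\Rew/(1+a))^{\tau}\bigr)$ for $t\geq 1$ is the correct closed form of the recurrence $\TM{R}{t+1} = (1+a)\,\TM{R}{t} - b\,\TM{S}{t+1}(\delta\Rew)^{t+1}$ started at $\TM{R}{0}$, and your justification that minimal/tight means equality at every round (no circularity, uniquely determined forward in time) is fine.

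The gap is in your final identification with the statement. Relabeling $t\mapsto t+1$ in your formula yields $\TM{R}{t+1} = (1+a)^{t+1}\bigl(\TM{R}{0} - b\sum_{\tau=1}^{t+1}\TM{S}{\tau}(\delta\Rew/(1+a))^{\tau}\bigr)$, with exponent $t+1$ and summation up to $t+1$; the corollary instead attaches exponent $t$ and a sum up to $t$ to $\TM{\Roptsingle}{t+1}$. These do not coincide: at $t=1$ the corollary reads $\TM{\Roptsingle}{2} = (1+a)\TM{R}{0} - b\,\TM{S}{1}\delta\Rew$, whereas the tight recurrence gives $\TM{R}{2} = (1+a)^{2}\TM{R}{0} - (1+a)\,b\,\TM{S}{1}\delta\Rew - b\,\TM{S}{2}(\delta\Rew)^{2}$. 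So either the printed corollary carries an off-by-one in its left-hand index relative to the recurrence of Theorem~\ref{th:gen-symm-equilibrium-no-buy-back} (in which case your derivation is the correct one and you should state the discrepancy explicitly rather than claim agreement), or a different convention about which round is the first constrained one is intended and must be spelled out. Either way, asserting that a mere index shift reproduces ``exactly the claimed expression'' is unjustified; the $t=1,2$ sanity check you deferred to the end is exactly what exposes this mismatch, so it belongs inside the proof, not after it.
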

We stress  that  the \emph{minimal} rewards satisfying the no buyback condition must   satisfy the no buyback condition \emph{tightly}, and thus must coincide with the   rewards $\tm{\Roptsingle}$ in Corollary~\ref{cor:reward-growth-single}. 

\begin{remark}[avoid users minting with tight no buyback]\label{rem:tight-nobuyback}
Observe that  the no buyback condition   only requires supply to not exceed demand (Definition~\ref{def:nobuysell}). Hence, it is possible that (users') demand exceeds (validators') supply  
($
    m \tm{b_U} - n \tm{s_V}> 0
$).
In this case, the system has to provide this difference of tokens to the market (users). On the one hand, it allows the system to  make  some further profit. 
On the other hand, 
there are scenarios in which this might be difficult or undesirable (typically, it requires the system to implement a ``special'' minting mechanism that allows users to buy tokens). 
In such cases, we require a \emph{tight} version of the no buyback condition (Definition~\ref{def:nobuysell}) in which demand matches supply. 
The analysis of this tight no buyback condition is given by Corollary~\ref{cor:reward-growth-single}, while the formula in Theorem~\ref{th:gen-symm-equilibrium-no-buy-back} with equality gives the recursion for the rewards.  
\end{remark}

As we discuss below, implementing quantitative rewarding can be difficult due to the possibility of an uncontrolled rewards growth.  Indeed, Corollary~\ref{cor:reward-growth-single} states that the initial rewards $\TM{R}{0}$ cannot exceed a certain value unless rewards grow exponentially. This value, critically, depends on future service values. In Section~\ref{sec:implementation-single-token}, we describe an implementation under the assumption that the initial rewards $\TM{R}{0}$ can be set properly by the system (either having full knowledge about future service values, a good estimate, or some partial control over them).

\begin{remark}[QR vs fees burning]
\label{remark:burning}
	In practice,  the system collects fees paid in round $t$ and uses part of them as rewards in the same round. Validators receive $\tm{R} \cdot n \cdot r(1/n)$ tokens in total, defining the ``rewards-over-fees'' ratio as $\tm{ROF} := \frac{\tm{R} \cdot n \cdot r(1/n)}{m \tm{\token_U}}$. Systems that burn a constant fraction of fees and set the remainder as rewards maintain a constant $\tm{ROF} < 1$. Quantitative rewarding is more general: (1) $\tm{ROF}$ can vary over time, implying a nonconstant burn rate, and (2) $\tm{ROF} > 1$ allows monetary expansion, introducing extra reward tokens (still avoiding user minting according to Remark~\ref{rem:tight-nobuyback}). 
\end{remark}

\subsubsection{Main dilemma: Uncontrolled growth of rewards}\label{sec:dilemma-rewards-growth}
The bound in Corollary~\ref{cor:reward-growth-single} suggests that an \emph{exponential} growth of the rewards over time is necessary if the system starts with a too high initial rewards or  the service value is too low. 
Hence, although larger rewards might be desirable to improve security (Remark~\ref{rem:rewards-PoS-security}), some care is needed in case of ``shocks'' in the service value. 

    In the following experiment, we consider  constant service value and rewards that are set to the minimal value necessary in order to have no buyback (Theorem~\ref{th:gen-symm-equilibrium-no-buy-back}), and also impose the rewards to not be below some minimum value (set to 0.1 for the sake of exposition). For the sake of simplicity, we consider the rewards and cost sharing schemes in \eqref{eq:simpler-schemes} and  $g(n)=1$, and observe the following:
\begin{itemize}
    \item Sufficiently high service value can avoid rewards explosion (Figure~\ref{fig:impl-stable-const-S} compares two different service values under the same conditions). The necessary increase in the service value may simply be not possible due to inherent technological limitations and other external factors. 
    \item  Sufficiently small initial rewards also avoid the rewards explosion (Figure~\ref{fig:impl-stable-const-S-rewards} compares two initial rewards under the same conditions). Rewards however cannot be arbitrarily small since they must cover the costs of validators and be competitive against other source of investments. 
    \item A higher risk free rate may also trigger an exponential growth (Figure~\ref{fig:impl-stable-const-S:smaller-delta} shows that for smaller $\delta$ we need a smaller initial reward, or a larger service value, to avoid this explosion). 
    \item Allowing the token price to decrease does avoid the rewards explosion that instead occur with stable prices, once we consider the rewards in money (Figure~\ref{fig:impl-non-stable-const-S}). Decreasing prices are however not desirable, and perhaps one may want increasing prices, which turn out to make the monetary reward explosion even more severe. 
\end{itemize}

    The above uncontrolled growth represents a potential ``death spiral'' stemming from necessary equilibrium conditions. First, it leads to impractical implementation due to numerical explosion in ledger validators accounts (thus violating feasibility). Second, it  may even render such equilibria nonexistent for some parameter combinations.  Note that adapting the rewards requires knowledge about shocks far in the future. Indeed, Theorem~\ref{th:gen-symm-equilibrium-no-buy-back} states that rewards cannot be reduced arbitrarily from one step to the next. Such powerful oracles are generally not available, and solutions based on them also violate feasibility.

    \begin{figure}
        \centering
        \includegraphics[scale=.5]{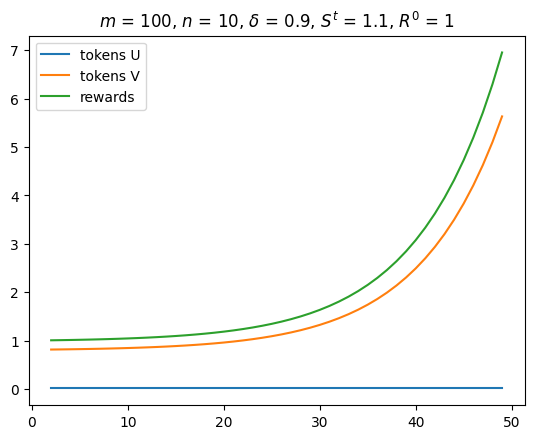}
        \includegraphics[scale=.5]{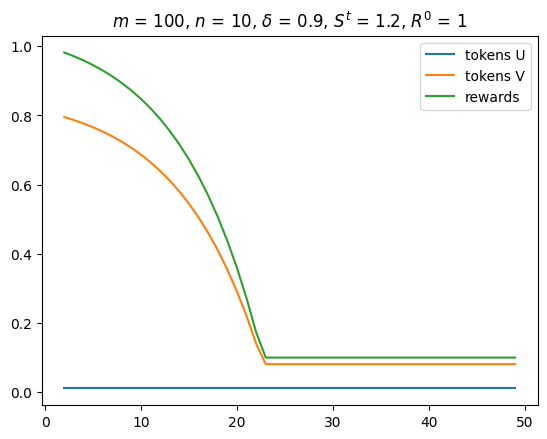}
        \caption{The effect of service values on rewards growth when no buyback condition: a small service value (left) can trigger an exponential increase in both rewards and validators staked amounts, while a slightly bigger service value (right) can remove this growth.}
        \label{fig:impl-stable-const-S}
    \end{figure}
    
    \begin{figure}
        \centering
        \includegraphics[scale=.5]{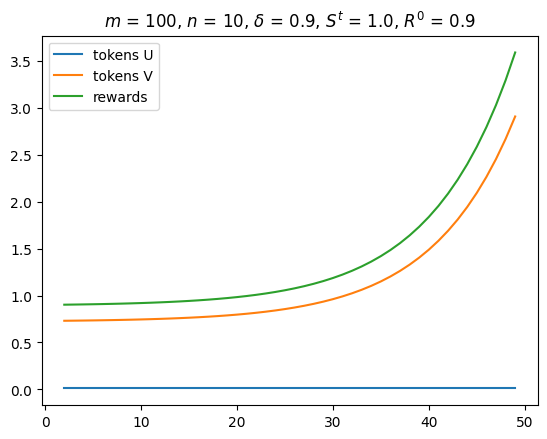}
        \includegraphics[scale=.5]{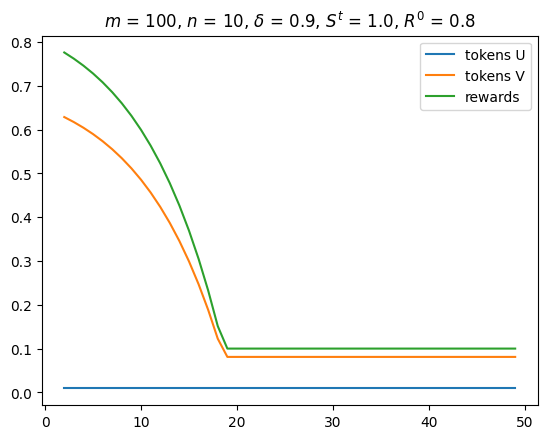}
        \caption{A too high initial rewards (left) causes an exponential growth, as opposed to a smaller initial rewards (right) leading to stable rewards and staked tokens (note the different scale of the y-axis in the two plots). In both cases, we fix a minimum value of $0.1$ for the rewards, which is the flat part of the green line in the right plot.}
        \label{fig:impl-stable-const-S-rewards}
    \end{figure}
    
    \begin{figure}
        \centering
        \includegraphics[scale=.5]{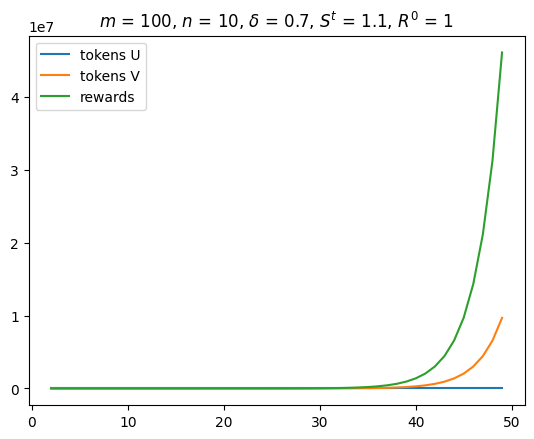}
        \includegraphics[scale=.5]{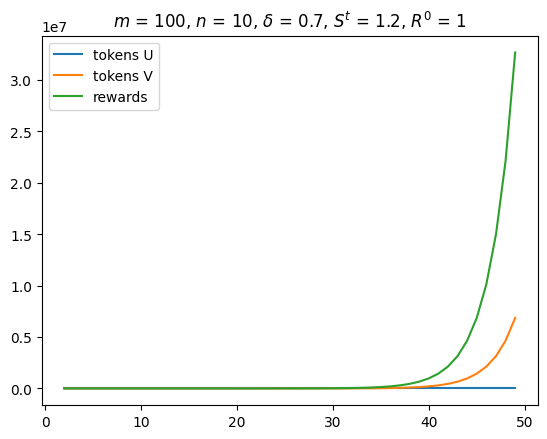}
        \caption{The effect of risk free rate ($\delta$) on rewards growth when no buyback condition: a smaller $\delta$  might trigger an exponential increase in both rewards and validators staked amounts (compare the right picture here with the right picture  in Figure~\ref{fig:impl-stable-const-S}).}
        \label{fig:impl-stable-const-S:smaller-delta}
    \end{figure}

\begin{figure}
        \centering
        \includegraphics[scale=.5]{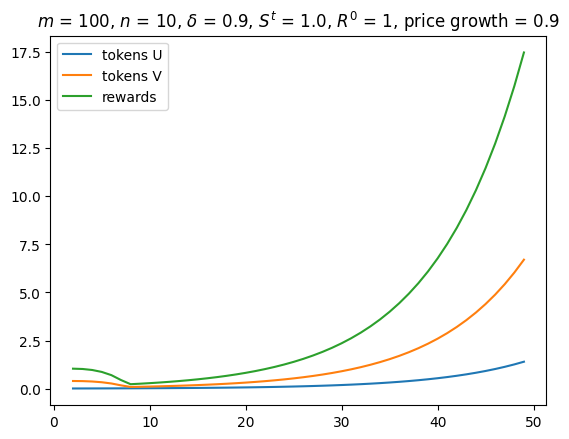}
        \includegraphics[scale=.5]{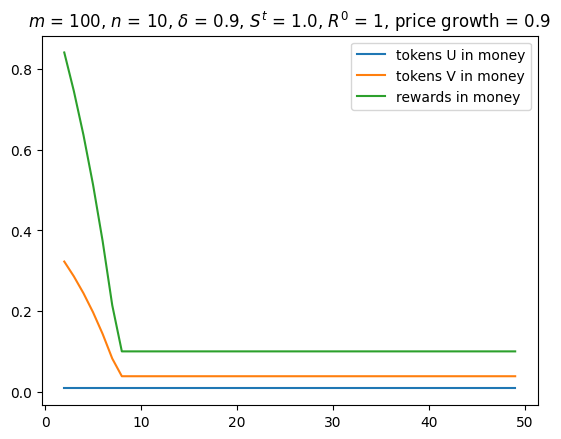}
        \caption{The effect of non-stable prices on rewards and validators staked tokens. We set decreasing prices according to the price growth parameter $\delta\Rew=0.9$. Though the rewards and staked tokens increase exponentially (left) their value  expressed in money -- rewards and staked tokens market cap -- does not (right).}
        \label{fig:impl-non-stable-const-S}
    \end{figure}

\subsection{Policy that accommodates and implements the stable price equilibrium}\label{sec:implementation-single-token}
In this section, we consider the fundamental question of how one can compute and implement an equilibrium with stable prices satisfying the no buyback condition. Stable prices require to keep a precise  amount of tokens of users (on one side) and of tokens of validators (on the other side) given by Corollary~\ref{cor:stable-price-equilibria}. The no buyback condition requires the system to set the rewards so to satisfy the condition in Theorem~\ref{th:gen-symm-equilibrium-no-buy-back}.
For the sake of exposition, we next describe an implementation for the reward and service fee schemes in \eqref{eq:general-schemes}. We should point out that this implementation does not specify how to set the initial rewards $\TM{R}{0}$ which is the crucial point to avoid rewards explosion (Corollary~\ref{cor:reward-growth-single}).   

\paragraph{Given parameters and constraints.}
\begin{enumerate}
	\item The risk free rate $r$ and thus the discount factor $\delta = 1/(1+r)$.
	\item The total number $m$ of users and the total number $n$ of validators (both constant over time). 
	\item There is no need for the system to buyback tokens  (Definition~\ref{def:nobuysell}).  
\end{enumerate}
\paragraph{System parameters.} The system design depends on essentially two parameters: 
\begin{enumerate}
	\item Total value of the service  $\tm{S}$ which is divided among the users at each step $t$.
	\item Total amount of rewards  $\tm{R}$ which is distributed to the validators at each step $t$.  
\end{enumerate}

\paragraph{Assumption.} By the beginning of Subround~II (buy or sell) of a current round $t$, each user knows the service value $\TM{S}{t+1}$ at the next round, and each validator knows the total rewards $\TM{R}{t+1}$ available in the next round.

\paragraph{Suggested equilibrium description and properties.} At each round $t\geq 0$, 
\begin{enumerate}
	\item Users strategies: 
	\begin{enumerate}
		\item \label{itm-stable-implement-tkU}
        Start with $\tm{\token_U}=\frac{\tm{S}}{m} \cdot \delta \cdot \kappa_S$ tokens, where $\kappa_S= g(n) \cdot \frac{m-1}{m}\cdot s'(1/m)$.
		\item  Spend all these tokens to get a service value equal to $\tm{S} \cdot s(1/m)$. 
		\item Given the next round service value $\TM{S}{t+1}$, buy   $\TM{b_{U}}{t} = \TM{\token_{U}}{t+1} = \frac{\TM{S}{t+1}}{m} \cdot \delta \cdot \kappa_S$  tokens required for the next round according to Item~\ref{itm-stable-implement-tkU}.
	\end{enumerate}  
	\item Validators strategies: 
	\begin{enumerate}
		\item \label{itm-stable-implement-tkV} Start with $\tm{\token_V} =  \frac{\tm{R}}{n} \cdot \frac{\delta}{1 - \delta} \cdot \kappa_R$ tokens, where $\kappa_R = \frac{n-1}{n} \cdot r'(1/n)$.
  
		\item  Stake all these tokens to get  $\tm{R} \cdot r(1/n)$ new tokens as reward. 
		\item Given the rewards $\TM{R}{t+1}$ of next round, sell this amount of tokens (if negative, buy tokens): 
		$
			\tm{s_V} =  \left(\tm{R} - \TM{R}{t+1}\right)\cdot  \frac{\delta}{1 - \delta}\cdot \kappa_R + \tm{R}\cdot r(1/n)   
		$,
	which yields  the required $\TM{\token_V}{t+1}=\frac{\TM{R}{t+1}}{n} \cdot \frac{\delta}{1 - \delta} \cdot \kappa_R$ tokens for next round $t+1$ according to Item~\ref{itm-stable-implement-tkV}. 
	\end{enumerate} 
	\item  System actions: 
	\begin{enumerate}
		\item Given the next round service value $\TM{S}{t+1}$, set the next total rewards $\TM{R}{t+1}$ so that 
		\begin{align*}
			m \cdot \TM{S}{t+1} \cdot \delta \cdot \kappa_S \geq n \cdot  \left(\tm{R} - \TM{R}{t+1}\right)\cdot  \frac{\delta}{1 - \delta}\cdot \kappa_R + n \cdot \tm{R}\cdot r(1/n)  
		\end{align*}
	which guarantees the no buyback condition, that is, $m \cdot \tm{b_U}\geq n \cdot \tm{s_V}$. 
	\item Announce the next  rewards $\TM{R}{t+1}$ before the token buy or sell (Subround~II) begins.
	\item Sell the required amount of tokens, $m \TM{\token_U}{t+1} - n \tm{s_V}$, to match demand during Subround~II. 
	\end{enumerate}
	\item Prices: all tokens are exchanged at a stable price $\tm{\price}=1$.
\end{enumerate}

\section{A two-token model}\label{sec:two-token-model}
\newcommand{\tokenpay}{\mathbf{A}}%{\mathbb{TKU}}
\newcommand{\tokenstake}{\mathbf{B}}%{\mathbb{TKV}}
\newcommand{\tokenGEN}[1]{\mathbf{#1}}

\newcommand{\tkA}[1]{{#1}_{\tokenpay}}%{\stackrel{\tiny{SER}}{#1}}
\newcommand{\tkB}[1]{{#1}_{\tokenstake}}%{\stackrel{\tiny{STK}}{#1}}

\newcommand{\tokenA}{\tokenpay}
\newcommand{\tokenB}{\tokenstake}

\newcommand{\priceA}{\tkA{\price}}%{\price_{\tokenpay}}
\newcommand{\priceB}{\tkB{\price}}%{\price_{\tokenstake}}

\newcommand{\RewA}{\tkA{\Rew}}%{\price_{\tokenpay}}
\newcommand{\RewB}{\tkB{\Rew}}%{\Rew_{\tokenstake}}

\newcommand{\RRA}{\tkA{R}}%{\price_{\tokenpay}}
\newcommand{\RRB}{\tkB{R}}%{\Rew_{\tokenstake}}

\newcommand{\sellA}[1]{\tkA{s_{#1}}}
\newcommand{\sellB}[1]{\tkB{s_{#1}}}

\newcommand{\buyA}[1]{\tkA{b_{#1}}}
\newcommand{\buyB}[1]{\tkB{b_{#1}}}

\newcommand{\useA}[1]{\tkA{u_{#1}}}
\newcommand{\useB}[1]{\tkB{u_{#1}}}

\newcommand{\inflation}{{\mathcal{I}}}
\newcommand{\inflationA}{\tkA{\mathcal{I}}}
\newcommand{\inflationB}{\tkB{\mathcal{I}}}

In this section, we consider a natural variant of the previous (single-token) model to the case in which we have two tokens, token $\tokenA$ and $\tokenB$,  whose use and purpose is as follows:
\begin{itemize}
    \item token $\tokenB$ is used by the users to pay and get the service, and 
    \item token $\tokenA$ is used by the validators for staking to get rewarded with more tokens (of both types).
\end{itemize}
A number of existing systems follow a similar two-token scheme (see Figure~\ref{fig:two-token-systems} for some examples) and a model based on the similar assumptions has been recently given by \cite{Dimitri2023}. 
Both tokens can be exchanged in the spot market as before, and thus we have a price for each token (see  Figure~\ref{fig:notation-two-token} for some additional notation for the two-token model). The corresponding utilities are naturally given by  
    \begin{align}\label{eq:usr-utility-istant-two}
     \disc{U_i} =  \sum_{t=0}^\infty \delta^t \cdot \tm{U_i}\ ,  && 
        \tm{U_i} = & \tm{S_i} \cdot g(n) - \tm{\buyA{i}} \cdot \tm{\priceA} - \tm{\buyB{i}} \cdot \tm{\priceB} \ . 
        \\
        \label{eq:val-utility-istant-two}
        \disc{V_j} =  \sum_{t=0}^\infty \delta^t \cdot \tm{V_j}\ ,  && \tm{V_j} = & \tm{\sellA{j}} \cdot \tm{\priceA} + \tm{\sellB{j}} \cdot \tm{\priceB}- v \ . 
    \end{align}

\begin{figure}\centering
	\begin{tabular}{|c|c|c|c|c|}
		\hline
		System & token $\tokenA$ & purpose & token $\tokenB$ & purpose \\\hline
		\hline
		DFINITY \citep{dfinity2022internet} & ICP & staking \& rewards & Cycle & computation (service) \\
		\hline
		NEO \citep{NEO-dual-token} & NEO & staking & GAS & pay transactions  \\
		\hline
		Axie Infinity \citep{Axie-dual-token} & AXS & governance and staking & SLP & play (breeding Axie)  
        \\
		\hline
		IOTA \citep{saa2023iota} & IOTA &  staking \& rewards & Mana & access ledger  \\
		\hline
	\end{tabular}
	\caption{Examples of two-token systems.}
	\label{fig:two-token-systems}
\end{figure}

\begin{figure}
    \centering
\block{
\begin{itemize}
    \item $\tm{R} = (\tm{\RRA}, \tm{\RRB})$ is the reward at time $t$.
    \item $\tm{s}_j = (\tm{\sellA{j}}, \tm{\sellB{j}})$ is the selling strategy of validator $j$.
    \item  $\tm{b}_i = (\tm{\buyA{i}}, \tm{\buyB{i}})$ is the buying strategy of user $i$. 
    \item $\tm{u_i}$ is the amount of token $\tokenA$ used by user $i$ to pay for the service.
    \item $\tm{\token_p}=(\tm{\tokenA_p}, \tm{\tokenB_p})$ are the token holdings of a generic player $p$ (a user or a validator).
    \item $\tm{\price}=(\tm{\priceA}, \tm{\priceB})$ are the prices of the two tokens. 
\end{itemize}
}   
\caption{Notation for the two-token model.}
    \label{fig:notation-two-token}
\end{figure}

\newcommand{\RRAj}{R_{\tokenA j}}
\newcommand{\RRBj}{R_{\tokenB j}}

For the sake of exposition, we consider the simpler reward sharing and payment schemes \eqref{eq:simpler-schemes}, where token $\tokenA$ is used for staking, and each validator $j$ is rewarded with quantities $\tm{\RRAj}$ and $\tm{\RRBj}$ of tokens $\tokenA$ and $\tokenB$, respectively. Users get a service value depending on the amount of tokens $\tokenB$ they use, i.e., they pay to the system. Thus, we have  
\begin{align}\label{eq:two-token-rewards-simple}
\tm{S_i} = \tm{S} \cdot \left(\frac{\tm{u_i}}{\sum_{k} \tm{u_k}}\right)\ ,  && 
	\tm{\RRAj} = \tm{\RRA} \cdot \left(\frac{\tm{\tokenA_j}}{\sum_v \tm{\tokenA_v}} \right)\ ,  &&
	\tm{\RRBj} = \tm{\RRB} \cdot \left(\frac{\tm{\tokenA_j}}{\sum_v \tm{\tokenA_v}} \right) \ . 
\end{align}
The validators' selling constraint for the single token \eqref{eq:val-sell-constraint} extends naturally into 
\begin{align}
	\label{eq:val-sell-constraint-A}
	\TM{\tokenA_j}{t+1}  =  \tm{\tokenA_j} + \tm{\RRAj}  - \tm{\sellA{j}} \geq 0 \\
	\label{eq:val-sell-constraint-B}
	\TM{\tokenB_j}{t+1} =  \tm{\tokenB_j} + \tm{\RRBj}  - \tm{\sellB{j}}\geq 0
\end{align}
Similarly, the users' selling constraint for the single token \eqref{eq:usr-sell-constraint} extends as follows:
\begin{align}
    \TM{\tokenA_i}{t+1} =  \tm{\tokenA_i} + \buyA{i} \geq 0 \label{eq:usr-sell-constraint-A}
    \\
    \TM{\tokenB_i}{t+1} =  \tm{\tokenB_i} - \tm{u_i} + \buyB{i} \geq 0
    \label{eq:usr-sell-constraint-B}
\end{align}
where the asymmetry is due to the fact that only token $\tokenB$ is used for getting the service, by paying  $\tm{u_i}$ tokens.
We next generalize the condition of  (symmetric)  equilibrium (Definition~\ref{def:weakly-symmetric-equilibrium} and Remark~\ref{rem:positive-token-holding}) by requiring a ``minimal'' set of selling constraints to be  non-strict (see Remark~\ref{re:non-strict-two-tokens} below).

\begin{definition}[symmetric equilibrium two tokens]\label{def:symm-two-tokens}
Consider a system policy given by \eqref{eq:two-token-rewards-simple}, and a triple of users' strategies, validators' strategies, and  prices, $\{(\tm{u_i}, \tm{b_i}), \tm{s_j},  \tm{\price}\}$, 
    such that the corresponding selling constraints of users \eqref{eq:usr-sell-constraint-A}-\eqref{eq:usr-sell-constraint-B} and of validators \eqref{eq:val-sell-constraint-A}-\eqref{eq:val-sell-constraint-B} are satisfied. We say that such a triple is an \emph{equilibrium} if, for every user $i$ and every validator $j$
    \begin{enumerate}
       \item  $(\tm{u_i}, \tm{b_i})$ maximizes the discounted utility \eqref{eq:usr-utility-istant-two} of user $i$,  given all other strategies  \item  $\tm{s_j}$ maximizes the discounted utility \eqref{eq:val-utility-istant-two} of validator $j$,  given all other strategies
       
        \item the selling constraint of token $\tokenB$ are non-strict for user $i$ (resp.,  token $\tokenA$   for validator $j$), and thus the corresponding token holdings are strictly positive, that is, $\tm{\tokenB_i}>0$ and $\tm{\tokenA_j}>0$
    \end{enumerate} 
    Moreover, we say that such an equilibrium is \emph{symmetric} if the token holdings of all users are the same, and similarly, if all token holdings of all validators are the same. In particular,  we have
    \begin{align}\label{eq:symmetric--token-holdins-two}
        \tm{\tokenB_i}= \tm{\tokenB_U}>0 && \tm{\tokenA_j} = \tm{\tokenA_V}>0 
    \end{align}
    for all users $i$ and for all validators $j$ and for all $t$.
\end{definition}

\begin{remark}[minimal strict selling constraints]\label{re:non-strict-two-tokens}
    Note that in the definition above,  each type of player -- validator or user  -- has non-strict selling constraint only in its own ``main purpose'' token (see Equation~\ref{eq:symmetric--token-holdins-two}). This  corresponds to the natural requirement of continuous participation in staking and in accessing and paying for the service.  Furthermore, the price analysis below implies that equilibria with certain desired prices are impossible if some of the other selling constraints is also strict. 
\end{remark}

The proof of the next two lemmas is given in  Section~\ref{sec:two-token-analysis}. 
 Next lemma is the generalization of Lemma~\ref{le:prices-restricted} to two tokens, and it provides conditions in the prices based on the validators' strategies.

\begin{lemma}[validators part]\label{lem:prices-two-tokens}
In the two-token model, the corresponding prices at any  symmetric equilibrium must satisfy the following conditions. 
    If the selling constraints of $\tokenB$ are non-strict, then 
    $
    %\label{eq:val-priceB-two-token}
    \TM{\priceB}{t-1}  = \delta  \cdot \tm{\priceB} \ . 
$
If the selling constraints of $\tokenA$ are non-strict, then
\begin{align}\label{eq:val-priceA-two-token}
    \TM{\priceA}{t-1} & = \delta  \cdot \tm{\priceA} \cdot (1 +\tm{\inflationA}) +  \delta \cdot \tm{\priceB} \cdot \tm{\inflationB} \ ,
\end{align}
where
\begin{align}\label{eq:two-token-inflations}
	\tm{\inflationA}:= \frac{\tm{\RRA}}{n\tm{\tokenA}_{V}} \frac{n-1}{n} \ ,&& \tm{\inflationB} := \frac{\tm{\RRB}}{n\tm{\tokenA}_{V}} \frac{n-1}{n} \ .
\end{align}
and $\tm{\tokenA_V}$ is the token holding (staking) of any validator at this equilibrium.    
\end{lemma}

Note that in absence of token $\tokenB$, we recover the result in Lemma~\ref{le:prices-restricted} for a single token. In this case, only token $\tokenA$ is used for staking and for rewards, and thus $\tm{\RRB}=0=\tm{\tkB{\mathcal{I}}}$. The two quantities in $\eqref{eq:two-token-inflations}$  have an intuitive meaning as they correspond to the total amount of tokens rewarded -- in each of the two tokens -- over the total amount of staked tokens -- the latter comprise only tokens  $\tokenA$. Note  that these two expressions in $\eqref{eq:two-token-inflations}$ are \emph{not} symmetric in the two tokens, as the staking token $\tokenA$ appears in both denominators. 

We next consider how the users' buying strategies relate to the prices.

\begin{lemma}[users part]\label{lem:prices-two-tokens-usr-side}
    In the two-token model,  the corresponding prices at any  symmetric equilibrium must satisfy the following conditions. If the buying constraints of $\tokenB$ are non-strict, then 
    \begin{align*}
    %\label{eq:usr-priceB-two-token}
        \TM{\priceB}{t-1} = \delta \cdot \begin{cases}
        \tm{\Ser} & \text{if  } \tm{u_i} = \tm{\tokenB_U} \\
        \tm{\priceB} & \text{if  } \tm{u_i} < \tm{\tokenB_U} 
    \end{cases}
    && \tm{\Ser} = \frac{\tm{S}}{m\tm{\tokenB_U}} \cdot g(n) \cdot \frac{m-1}{m} \ . 
    \end{align*}
    Moreover, if the selling constraints of $\tokenA$ are non-strict, then 
$
%\label{eq:usr-priceA-two-token}
        \TM{\priceA}{t-1} = \delta  \cdot \tm{\priceA} . 
 $
\end{lemma}

\subsection{Generic symmetric equilibria}\label{sec:generic-symmetric-two}
\newcommand{\mixtwo}{\mathbb{MIX}_{\tokenA\tokenB}}
\newcommand{\RewAB}{\Rew_{\tokenpay\tokenstake}}
\newcommand{\sertofeeB}{Ser2Fees_{\tokenB}}
\newcommand{\rewtostakeA}{Rew_{\tokenA}2Stake}
\newcommand{\rewtostakeB}{Rew_{\tokenB}2Stake}

The following class of generic equilibria captures equilibria in the two-token model where prices of token $\tokenB$ are \emph{$\gamma$-stable}, that is, they satisfy \eqref{eq:price-growth}, thus in particular the case in which we aim at stable prices for  token $\tokenB$ used by the user to get the service. 

\begin{definition}[generic symmetric equilibrium for two tokens]\label{def:two-token-symm-eq-general}
	A symmetric equilibrium for the two token model is generic if the following conditions hold for all $t\geq 1$:
	\begin{enumerate}
		\item The service to fees ratio is constant,
		\begin{align*}
        %\label{eq:serv2fees-two}
			\tm{\sertofeeB} := \frac{\tm{S}}{m\tm{\tokenB_U}\cdot \tm{\priceB}} = \sertofeeB\ .  
		\end{align*}
  \item The prices of $\tokenB$ satisfy 
		\begin{align*}
			\TM{\priceB}{t-1} =&  \delta \cdot \tm{\priceB}\cdot \sertofeeB\cdot \kappa_S 
		\end{align*}
		where constant $\kappa_S$ depends only on the number of users $m$,  the number of validators $n$, and on the service fee scheme. 
		\item Each user $i$ starts round $t$ with the same token holding $\tm{\tokenB_U}$, uses all its current holding tokens for the service  ($\tm{u_i} = \tm{\tokenB_U}$),  and buys new tokens needed for the next round accordingly   ($\tm{b_i}=\TM{\tokenB_U}{t+1}$).
	\item For the following rewards to stake ratios, 
		\begin{align*}
			%\label{eq:rew2stake}\tm{\rewtostakeA} := \frac{\tm{\RRA}}{n\tm{\tokenA_V}}  &&
			\tm{\rewtostakeB} := \frac{\tm{\RRB}}{n\tm{\tokenA_V}}  
		\end{align*}
	the prices of $\tokenA$ satisfy
	\begin{align}\label{eq:price-two-tokens-generic}
		\TM{\priceA}{t-1} & = \delta  \cdot \tm{\priceA} \cdot (1 +\tm{\rewtostakeA}\cdot \kappa_R) +  \delta \cdot \tm{\priceB} \cdot \tm{\rewtostakeB}\cdot \kappa_R \ ,
	\end{align}
	where  constant $\kappa_R$ depends only on the number of users $n$,  and on the rewards sharing scheme.
	\end{enumerate}
\end{definition}

\begin{theorem}\label{th:symm-eq-two-token}
Any symmetric equilibrium for the two-token model with the reward and service fee schemes in \eqref{eq:two-token-rewards-simple} and whose prices of token $\tokenB$ satisfy \eqref{eq:price-growth} is a generic symmetric equilibrium.
	This in particular holds true for the case of stable prices for token $\tokenB$. 
\end{theorem}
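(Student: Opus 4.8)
The plan is to follow the proof of Theorem~\ref{tH:prices_implies_generic} almost verbatim, replacing the single price identity coming from Lemma~\ref{le:prices-general} by the two price identities of Lemma~\ref{lem:prices-two-tokens} and Lemma~\ref{lem:prices-two-tokens-usr-side}. The first observation is that the definition of symmetric equilibrium for two tokens already places exactly the constraints we need in the ``non-strict'' position: by \eqref{eq:symmetric--token-holdins-two} every user has $\TM{\tokenB_i}{t+1}=\TM{\tokenB_U}{t+1}>0$ and every validator has $\TM{\tokenA_j}{t+1}=\TM{\tokenA_V}{t+1}>0$. Consequently the $\tokenB$-branch of Lemma~\ref{lem:prices-two-tokens-usr-side} yields \eqref{eq:usr-priceB-two-token}, and the $\tokenA$-branch of Lemma~\ref{lem:prices-two-tokens} yields \eqref{eq:val-priceA-two-token}; crucially, no hypothesis on the price path of $\tokenA$, and no use of the remaining (possibly strict) constraints, is required.

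The crux, and the only step that invokes the hypothesis \eqref{eq:price-growth}, is discarding the degenerate branch of \eqref{eq:usr-priceB-two-token}. If some user $i$ had $\tm{u_i}<\tm{\tokenB_U}$, that lemma would force $\TM{\priceB}{t-1}=\delta\cdot\tm{\priceB}$; combined with \eqref{eq:price-growth}, i.e.\ $\TM{\priceB}{t-1}=\gamma\cdot\tm{\priceB}$ with $\tm{\priceB}>0$, this gives $\gamma=\delta$, contradicting the standing assumption $\gamma\neq\delta$. Hence $\tm{u_i}=\tm{\tokenB_U}$ for every user $i$; feeding this into the user-side token recursion \eqref{eq:usr-sell-constraint-B} and using symmetry ($\TM{\tokenB_i}{t+1}=\TM{\tokenB_U}{t+1}$) gives $\tm{\buyB{i}}=\TM{\tokenB_U}{t+1}$. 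This is condition~3 of Definition~\ref{def:two-token-symm-eq-general}.

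It then remains only to read off the constants. With $\tm{u_i}=\tm{\tokenB_U}$, identity \eqref{eq:usr-priceB-two-token} reads $\TM{\priceB}{t-1}=\delta\cdot\frac{\tm{S}}{m\tm{\tokenB_U}}\cdot g(n)\cdot\frac{m-1}{m}$; setting $\kappa_S:=g(n)\cdot\frac{m-1}{m}$ (consistent with the general formula, since scheme \eqref{eq:two-token-rewards-simple} has $s(x)=x$ and hence $s'(1/m)=1$), dividing by $\tm{\priceB}$, and substituting \eqref{eq:price-growth} gives $\tm{\sertofeeB}=\gamma/(\delta\kappa_S)$, a quantity independent of $t$ — this is condition~1, and the same identity rewritten as $\TM{\priceB}{t-1}=\delta\cdot\tm{\priceB}\cdot\sertofeeB\cdot\kappa_S$ is condition~2. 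Finally, with $\kappa_R:=\frac{n-1}{n}$ (again $r(x)=x$, so $r'(1/n)=1$), identity \eqref{eq:val-priceA-two-token} together with \eqref{eq:two-token-inflations} is literally the $\tokenA$-price identity of condition~4, once one identifies $\tm{\inflationA}=\tm{\rewtostakeA}\cdot\kappa_R$ and $\tm{\inflationB}=\tm{\rewtostakeB}\cdot\kappa_R$; note that, unlike the single-token generic equilibrium, the ratios $\tm{\rewtostakeA},\tm{\rewtostakeB}$ are not required to be constant, which is precisely why no ``global'' information about the service time series is needed. Specializing $\gamma=1$ gives the stable-price case, with $\sertofeeB=1/(\delta\kappa_S)$.

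I expect the only real obstacle to be bookkeeping rather than mathematics: one must verify that the ``minimal strict selling constraints'' of the two-token equilibrium definition (Remark~\ref{re:non-strict-two-tokens}) align exactly with the hypotheses of Lemmas~\ref{lem:prices-two-tokens} and~\ref{lem:prices-two-tokens-usr-side}, and that the argument never appeals to the constraints that are allowed to remain strict (users' holdings of $\tokenA$, validators' holdings of $\tokenB$) — this is exactly what lets the theorem prescribe a price path for $\tokenB$ alone. Everything past that point is a routine re-run of the single-token computation.
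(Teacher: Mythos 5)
Your proof is correct and is essentially the argument the paper intends (the paper states this two-token theorem without an explicit proof, leaving it as the direct analogue of Theorem~\ref{tH:prices_implies_generic}): you invoke exactly the branches of Lemmas~\ref{lem:prices-two-tokens} and~\ref{lem:prices-two-tokens-usr-side} that the two-token equilibrium definition's non-strict constraints make available, rule out the degenerate user branch via $\gamma\neq\delta$, and read off the constants $\kappa_S=g(n)\frac{m-1}{m}$ and $\kappa_R=\frac{n-1}{n}$ from the schemes in \eqref{eq:two-token-rewards-simple}. Nothing further is needed.
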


\subsection{No buyback  in the two-token model}
In this section, we study the implications of no buyback in the two-token model. As for the single token setting, we aim at equilibria where the system does not need to buyback tokens of either type.

\begin{definition}[no buyback two tokens]\label{def:no-buy-back-two-tokens}
      We say that a symmetric  equilibrium in the two-token model satisfies the no buyback condition if no additional tokens (of either type) are bought at any round by the system, that is, 
	$m \tm{\buyB{i}} \geq n \tm{\sellB{j}}$ and  $m \tm{\buyA{i}} \geq n \tm{\sellA{j}}$. 
\end{definition}

The following theorem provides a bound on the rewards that depends only on the next round's service value. The main advantage of the two-token model is that rewards can be adjusted ``immediately'' to the service value fluctuations, and they need to be low only when the latter is low. The mechanism in Section~\ref{sec:stable-price-mechanism-two} implements this idea.

\begin{theorem}\label{th:generic-nobuy-back-tow}
In any generic symmetric equilibrium, there is a maximum monetary reward for validators  in terms of tokens  $\tokenB$ that can be awarded without violating the no buyback condition. In particular, it must hold 
		\begin{align*}
			\tm{\RRB}\cdot \tm{\priceB}  \leq \delta \cdot\TM{S}{t+1} \cdot  \frac{\kappa_S}{n \cdot r(1/n)}    \ . 
		\end{align*}
  This is because validators always sell all the newly rewarded $\tokenB$ tokens and users buy (only) tokens of type $\tokenB$, thus implying that validators keep all their tokens $\tokenA$ and possibly buy new ones, $\tm{\tokenA_V}\geq \TM{\tokenA_V}{t-1}$. 
\end{theorem}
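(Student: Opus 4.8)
The plan is to compute the total demand and total supply of each token at a generic symmetric equilibrium that follows the price path \eqref{eq:price-growth} for token $\tokenB$, and then to read off the no buy back condition for token $\tokenB$. First I would determine what happens to token $\tokenA$: by Lemma~\ref{lem:prices-two-tokens-usr-side}, since only token $\tokenB$ is used to access the service, users have no incentive to hold token $\tokenA$, so $\tm{\buyA{i}} = 0$ for all users (they neither buy nor sell $\tokenA$). Combined with the no buy back requirement $m\tm{\buyA{i}} \geq n\tm{\sellA{j}}$, this forces $\tm{\sellA{j}} \leq 0$, i.e.\ validators never sell $\tokenA$ and may only acquire more. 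Plugging $\tm{\sellA{j}} \leq 0$ into the validators' $\tokenA$-constraint \eqref{eq:val-sell-constraint-A} together with symmetry $\tm{\tokenA_j} = \tm{\tokenA_V}$ gives $\TM{\tokenA_V}{t+1} = \tm{\tokenA_V} + \tm{\RRAj} - \tm{\sellA{j}} \geq \tm{\tokenA_V}$, which is exactly the monotonicity claim $\tm{\tokenA_V} \geq \TM{\tokenA_V}{t-1}$ stated in the theorem. Note this does \emph{not} contradict the symmetric-equilibrium requirement that $\tm{\tokenA_V}$ be the common value across validators at each fixed $t$; it is allowed to drift across rounds.

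Next I would handle token $\tokenB$. On the demand side, by Definition~\ref{def:two-token-symm-eq-general} each user buys exactly $\tm{\buyB{i}} = \TM{\tokenB_U}{t+1}$ tokens, so total demand is $m\TM{\tokenB_U}{t+1}$. Using the constant service-to-fees ratio $\sertofeeB$ and the price relation $\TM{\priceB}{t-1} = \delta\tm{\priceB}\sertofeeB\kappa_S$ (hence, specialized to the path \eqref{eq:price-growth}), I can rewrite $m\TM{\tokenB_U}{t+1}$ as $\TM{S}{t+1}/(\TM{\priceB}{t+1}\sertofeeB)$, and for stable prices of $\tokenB$ this simplifies further; the key identity to extract, analogous to Corollary~\ref{cor:stable-price-equilibria}, is $m\tm{\tokenB_U}\tm{\priceB} = \TM{S}{t}/\sertofeeB$ with $\sertofeeB = 1/(\delta\kappa_S)$. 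On the supply side: validators receive $\tm{\RRBj} = \tm{\RRB}\cdot r(1/n) = \tm{\RRB}/n$ new $\tokenB$ tokens (using the simple scheme \eqref{eq:two-token-rewards-simple}), and since users only ever hold $\tokenB$ for immediate service consumption and never accumulate $\tokenB$ beyond one round, the only way the market clears without buybacks is for validators to sell essentially \emph{all} their freshly minted $\tokenB$ — formally, any $\tokenB$ validators retain would have to be bought back by the system, so the no buy back condition forces $\tm{\sellB{j}} = \tm{\RRBj}$ and hence $\TM{\tokenB_V}{t+1} = \tm{\tokenB_V}$ (constant, and in fact zero is the natural choice). Total supply of $\tokenB$ is therefore $n\tm{\sellB{j}} = n\cdot \tm{\RRB}/n = \tm{\RRB}$, wait — more carefully, $n\tm{\sellB{j}} = \tm{\RRB}$ since $\sum_j \tm{\RRBj} = \tm{\RRB}$; I should double-check whether the $r(1/n)$ factor (which equals $1/n$ here) leaves a residual, and track it as $n\cdot r(1/n)\cdot \tm{\RRB}$ in case the scheme is the more general one.

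Finally I would impose $m\tm{\buyB{i}} \geq n\tm{\sellB{j}}$, i.e.\ $m\TM{\tokenB_U}{t+1} \geq n\cdot r(1/n)\cdot \tm{\RRB}$ (reading the reward as available at the round where it is sold; one must be careful about the time index — the newly rewarded tokens sold in Subround~II of round $t$ are $\TM{\RRB}{t}$, while the demand being matched is for round $t+1$ holdings). Substituting the demand expression $m\TM{\tokenB_U}{t+1} = \TM{S}{t+1}\cdot\delta\kappa_S/\TM{\priceB}{t+1}$ and then using stable prices $\TM{\priceB}{t+1} = \tm{\priceB}$ for token $\tokenB$, rearranging yields $\tm{\RRB}\cdot\tm{\priceB} \leq \delta\cdot\TM{S}{t+1}\cdot\kappa_S/(n\cdot r(1/n))$, which is the claimed bound. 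The main obstacle I anticipate is \emph{not} any hard inequality but the careful bookkeeping of (i) which round's reward and which round's service level appear on each side — the lookahead structure in Subround~II means round-$t$ selling matches round-$(t+1)$ demand — and (ii) justifying rigorously that no buy back forces validators to dump \emph{all} minted $\tokenB$ and hold zero, rather than merely a weaker inequality; this second point is what makes the bound tight and is where the argument "validators always sell all the newly rewarded $\tokenB$ tokens and users buy only tokens of type $\tokenB$" in the theorem statement does the real work, so I would spell out that step via the market-clearing identity $m\tm{\buyB{i}} = n\tm{\sellB{j}} + (\text{system sells})$ with the system-sells term required to be $\geq 0$.
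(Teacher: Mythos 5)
Your decomposition and bookkeeping track the paper's proof closely: you treat the two no-buy-back inequalities separately; you compute the $\tokenB$ demand as $m\tm{\buyB{i}}=m\TM{\tokenB_U}{t+1}=\TM{S}{t+1}/\bigl(\sertofeeB\cdot\TM{\priceB}{t+1}\bigr)$ and rewrite it via the generic-equilibrium price relation (note that $\TM{\priceB}{t}=\delta\cdot\TM{\priceB}{t+1}\cdot\sertofeeB\cdot\kappa_S$ already gives $\TM{S}{t+1}\cdot\delta\kappa_S/\tm{\priceB}$ for \emph{any} generic symmetric equilibrium, so your final specialization to stable prices is unnecessary and slightly narrows the statement); you identify the supply as $n\cdot\tm{\RRB}\cdot r(1/n)$; and your treatment of token $\tokenA$ (users never buy $\tokenA$, so no buy back forces $\tm{\sellA{j}}\le 0$ and hence $\TM{\tokenA_V}{t+1}\ge\tm{\tokenA_V}$) is exactly the paper's second item, with a bit more justification.

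The genuine gap is in how you justify the key supply-side fact that validators sell \emph{all} of their freshly rewarded $\tokenB$, i.e.\ $\tm{\sellB{j}}=\tm{\RRB}\cdot r(1/n)$ with $\tm{\tokenB_j}=0$. You attribute this to the no buy back condition itself (``any $\tokenB$ validators retain would have to be bought back by the system'') and propose to formalize it by a market-clearing identity with a nonnegative system-sells term. That argument runs in the wrong direction: if validators retained part of the minted $\tokenB$, the validator supply $n\tm{\sellB{j}}$ would be \emph{smaller}, the condition $m\tm{\buyB{i}}\ge n\tm{\sellB{j}}$ would only become easier to satisfy (the system just sells more to meet demand), so no buy back does not force full selling — and without full selling the claimed bound on $\tm{\RRB}\cdot\tm{\priceB}$ is no longer a necessary consequence of no buy back. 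The correct source of this step, and the one the paper uses, is equilibrium optimality rather than market clearing: since the price of $\tokenB$ follows \eqref{eq:price-growth} with $\gamma\neq\delta$, we have $\TM{\priceB}{t-1}\neq\delta\cdot\tm{\priceB}$, and Lemma~\ref{lem:prices-two-tokens} (validators' part) shows that a non-binding $\tokenB$ selling constraint would force $\TM{\priceB}{t-1}=\delta\cdot\tm{\priceB}$; hence the constraint binds, validators hold zero $\tokenB$ in every round, and $\tm{\sellB{j}}=\tm{\RRB}\cdot r(1/n)$ exactly. With that replacement the rest of your computation goes through and yields the stated inequality.
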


\begin{proof}
	  We consider the two inequalities of the no buyback condition (Definition~\ref{def:no-buy-back-two-tokens}) separately.
	\begin{enumerate}
		\item For tokens of type $\tokenB$, we observe that, for $\TM{\priceB}{t-1} \neq \delta \cdot \tm{\priceB}$,  the users buying strategies satisfy
		$
					m\tm{\buyB{i}} = 	m\TM{\tokenB_U}{t+1}  = \frac{\TM{S}{t+1}}{\TM{\sertofeeB}{t+1}\cdot \TM{\priceB}{t+1}}  =  \TM{S}{t+1} \cdot \frac{\delta}{\tm{\priceB}} \cdot  \kappa_S\ .   
		$
		Moreover, the validators' selling strategies satisfy
		$
			\tm{\sellB{j}}  = \tm{\RRB}\cdot r(1/n) \ , 
		$
		since the validators selling constraint for tokens of type $\tokenB$ must be strict, and thus $\tm{\tokenB_i}=0$ for all $t$. 
		Therefore the no buyback condition for tokens of type $\tokenB$ is equivalent to
		$
			n \cdot \tm{\RRB}\cdot r(1/n) \leq\TM{S}{t+1} \cdot \frac{\delta}{\tm{\priceB}} \cdot  \kappa_S . 
		$
		
		\item We also require no buyback for the tokens of type $\tokenA$, which means that $\tm{\sellA{j}}=0$ because, in any generic symmetric equilibrium users never hold (and thus never buy) tokens of type $\tokenA$: If users hold tokens $\tokenA$, the second part Lemma~\ref{lem:prices-two-tokens-usr-side} implies $\TM{\priceA}{t-1}  = \delta  \cdot \tm{\priceA}$, thus contradicting  \eqref{eq:price-two-tokens-generic} in Definition~\ref{def:two-token-symm-eq-general}. Hence,
		$
			\TM{\tokenA_V}{t+1} = \tm{\tokenA_V} +  \tm{\RRA}\cdot r(1/n) + \tm{\buyA{j}}
		$ with $\tm{\buyA{j}}\geq 0$.
	\end{enumerate}
 This completes the proof. 
\end{proof}

\subsection{Stable price mechanism in two-token model}\label{sec:stable-price-mechanism-two}
We next describe a simple mechanism which implements stable prices for token $\tokenB$, while  the price of $\tokenA$ is strictly decreasing:
\begin{enumerate}
    \item In order to have stable prices for $\tokenB$ and no buyback, we set $\tm{\RRB}$ such that 
    $
        \tm{\RRB}\cdot \tm{\priceB} = \TM{S}{t+1}\cdot L
    $,
    where $L$ is the constant given by Theorem~\ref{th:generic-nobuy-back-tow} that makes the no buyback condition hold tightly. Furthermore, we set $\tm{\RRA}=0$.
    \item The (equilibrium) buying strategies  for token $\tokenA$ are ``no buy and no sell'', that is,  $\tm{\tokenA_V} = \TM{\tokenA_V}{0}$, for a suitable $\TM{\tokenA_V}{0}$ specified below.  
    From the equation of the prices \eqref{eq:val-priceA-two-token}, we obtain  (details in Appendix~\ref{app:proof-priceA-proposed-equilibrium})
    \begin{align}\label{eq:priceA-proposed-equilibrium}
	\tm{\priceA} & =  \frac{\TM{\priceA}{t-1}}{\delta} -  \TM{S}{t+1}\cdot  \frac{L }{\TM{\tokenA_V}{0}} \frac{n-1}{n^2}   \ , 
\end{align}
which implies 
$
	\tm{\priceA}  <  \frac{\TM{\priceA}{0}}{\delta^t}
    .$
    \item From the previous equation, the discounted utility of validators, if deviating by selling all tokens $\tokenA$ at some round $\tau$, is at most 
     $
     	\delta^\tau \cdot \TM{\tokenA_V}{0} \cdot \TM{\priceA}{\tau} < \TM{\tokenA_V}{0}\cdot \TM{\priceA}{0},$
 while the discounted utility for the suggested strategies  (equilibrium) equals
 \begin{align*}
\sum_{t=0}^{\infty} \delta^t \cdot (\tm{\RRB}\cdot \tm{\priceB} - v) =  \sum_{t=0}^{\infty} \delta^t \cdot(\TM{S}{t+1}\cdot L) - \frac{v}{1-\delta}\ . 
 \end{align*}
\item In order to have feasible (nonnegative) prices, we need to set the initial payments and token holdings sufficiently high. 
\end{enumerate}

\begin{theorem}
\label{th:two-token-equilibria-exist}
    For any  service value satisfying Assumption~\ref{asm:service-level} and any reward and service sharing schemes satisfying Assumption~\ref{asm:schemes}, the strategies above are a symmetric equilibrium with stable prices for token $\tokenB$ used by users for getting the service. 
    \end{theorem}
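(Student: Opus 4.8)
The plan is to verify that the explicit strategies in Section~\ref{sec:stable-price-mechanism-two} satisfy all three requirements in the definition of a symmetric equilibrium for two tokens: (i) users maximize their discounted utility, (ii) validators maximize their discounted utility, and (iii) the ``main purpose'' selling constraints are non-strict (token $\tokenB$ for users, token $\tokenA$ for validators) with strictly positive holdings, while prices of token $\tokenB$ are stable. I would proceed by first fixing the price paths prescribed by the mechanism: $\tm{\priceB}=\TM{\priceB}{0}$ for all $t$ (stable), and $\tm{\priceA}$ given by the recursion $\tm{\priceA} = \TM{\priceA}{t-1}/\delta - \TM{S}{t+1}\cdot \frac{L}{\TM{\tokenA_V}{0}}\cdot\frac{n-1}{n^2}$ derived in the section. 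Using Assumption~\ref{asm:service-level} (bounded service) one checks $\tm{\priceA} < \TM{\priceA}{0}/\delta^t$, and by choosing the initial holdings and payments large enough (item~4 of the mechanism) one guarantees $\tm{\priceA}\geq 0$ for all $t$; I would make this quantitative, e.g. it suffices that $\TM{\priceA}{0}$ exceeds the discounted sum $\sum_{\tau\geq 1}\delta^{\tau}\cdot \TM{S}{\tau+1}\cdot \frac{L}{\TM{\tokenA_V}{0}}\frac{n-1}{n^2}$, which is finite by Assumption~\ref{asm:service-level}.

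Next I would handle the two optimality conditions. For \emph{validators}, the suggested strategy is ``hold all $\tokenA$, sell all rewarded $\tokenB$''. The key computation is the deviation bound already sketched: any deviation that dumps the entire $\tokenA$ stake at step $\tau$ yields at most $\delta^\tau\TM{\tokenA_V}{0}\TM{\priceA}{\tau} < \TM{\tokenA_V}{0}\TM{\priceA}{0}$, while the equilibrium payoff is $\sum_{t\geq 0}\delta^t(\TM{S}{t+1}L - v)$; so as long as the latter exceeds $\TM{\tokenA_V}{0}\TM{\priceA}{0}$ (again arrangeable by the free parameters, using a uniform lower bound on the service level if needed, or simply folding it into the ``sufficiently high'' initial choices), no such deviation is profitable. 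More carefully, I would argue it suffices to rule out the ``sell-everything-at-some-point'' family of deviations together with Lemma~\ref{lem:prices-two-tokens}: since at a symmetric equilibrium the price recursion \eqref{eq:val-priceA-two-token} is exactly the indifference/no-arbitrage condition for marginal $\tokenA$ holdings, the validator is indifferent to the \emph{timing} of selling $\tokenA$ along the equilibrium path, so the only thing to check is that total liquidation is dominated — which is the displayed inequality. For \emph{users}, the strategy is ``spend all $\tokenB$ on the service, buy exactly $\TM{\tokenB_U}{t+1}$ for next round, never touch $\tokenA$''; since $\tm{\priceB}$ is constant, \eqref{eq:usr-priceB-two-token} with $\tm{u_i}=\tm{\tokenB_U}$ is the exact first-order condition, and by concavity of $s(\cdot)$ (Assumption~\ref{asm:schemes}) this is a global maximum; holding no $\tokenA$ is optimal because $\tokenA$ gives users no utility and $\tm{\priceA}>0$ makes acquiring it strictly costly.

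Then I would check feasibility and the structural conditions: the $\tokenB$ no-buy-back inequality holds with equality by construction of $L$ (Theorem~\ref{th:generic-nobuy-back-tow}), the $\tokenA$ no-buy-back holds because validators set $\tm{\sellA{j}}=0$ and users never sell $\tokenA$; the $\tokenB$ selling constraint for users is non-strict since they spend exactly what they hold and rebuy, with $\tm{\tokenB_U}=\tm{S}\delta\kappa_S/(m)>0$ (from the generic-equilibrium token holding formula specialized to two tokens), and the $\tokenA$ holding $\TM{\tokenA_V}{0}>0$ is held fixed, so it is non-strict and strictly positive. Finally I would confirm that these strategies indeed form a \emph{generic} symmetric equilibrium in the sense of Definition~\ref{def:two-token-symm-eq-general} (constant $\sertofeeB$, the prescribed $\tokenB$ price recursion, users spending their whole holding, and the $\tokenA$ price recursion with $\tm{\rewtostakeA}=0$), so Lemma~\ref{lem:prices-two-tokens} and Lemma~\ref{lem:prices-two-tokens-usr-side} apply and pin down the prices consistently. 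The main obstacle I anticipate is the validator optimality argument: one must make sure that no \emph{partial} or \emph{time-shifted} selling of $\tokenA$ beats the equilibrium, and argue cleanly that the equilibrium price recursion \eqref{eq:val-priceA-two-token} is exactly the marginal indifference condition so that only full liquidation need be compared — plus pinning down explicit inequalities on the free initial parameters (initial prices, initial $\tokenA$ holding, initial payments) that simultaneously ensure nonnegativity of $\tm{\priceA}$ and dominance of the equilibrium payoff over liquidation; this bookkeeping, while not deep, is where the argument could go wrong if the parameter constraints are mutually inconsistent, so I would exhibit an explicit feasible choice.
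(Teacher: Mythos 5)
Your proposal is correct in spirit but takes a genuinely different, more hands-on route than the paper. The paper's proof is a short verification that imports an existence argument from \cite{decentralization_friction} (their ``Remark 2, existence''): it checks that (i) all discounted utilities are finite because instantaneous utilities are proportional to the service level, which is bounded by Assumption~\ref{asm:service-level}; (ii) the prescribed strategies lie in a bounded interval (users buy proportionally to $\TM{S}{t+1}$, validators sell only the $O(\tm{S})$ rewarded $\tokenB$ tokens and never trade $\tokenA$); and (iii) the utilities are concave by Assumption~\ref{asm:schemes}. Global optimality of the first-order (price) conditions then follows from the cited result, with no explicit deviation analysis or parameter bookkeeping. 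You instead verify the equilibrium definition directly: the Euler-type conditions from Lemmas~\ref{lem:prices-two-tokens} and~\ref{lem:prices-two-tokens-usr-side}, an explicit liquidation-deviation bound for validators, nonnegativity of $\tm{\priceA}$ via the initial parameters, and the structural/no-buy-back checks. Your route is more informative (it exhibits the price paths and the feasible parameter region explicitly), while the paper's route buys exactly the step you flag as your main obstacle.

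That obstacle is a real soft spot as you state it: on the validator side, ``marginal indifference to timing, hence only full liquidation need be compared'' is not by itself a complete argument against arbitrary partial or time-shifted deviations in an infinite-horizon problem; one needs concavity of the validator's objective (Assumption~\ref{asm:schemes}) together with a transversality-type condition, which is precisely what the paper's conditions (i)--(iii) supply --- you invoke concavity only for users. Moreover, your sufficient condition that the equilibrium payoff exceed $\TM{\tokenA_V}{0}\cdot\TM{\priceA}{0}$ is stronger than necessary and, combined with the lower bound on $\TM{\priceA}{0}$ you need for nonnegative prices, effectively smuggles in an extra hypothesis (service level large relative to $v$) that the theorem does not assume; the paper avoids this entirely by never comparing against liquidation. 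If you add concavity of the validator problem and the finiteness of the discounted sums to your argument, the first-order conditions become sufficient, the liquidation comparison can be dropped, and the only remaining bookkeeping is the nonnegativity of $\tm{\priceA}$, which is item 4 of the mechanism.
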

\begin{proof}
    We follow a similar argument as in \cite[Remark~2 (existence)]{decentralization_friction} based on the following three conditions. First, the discounted utilities of all players (users and validators) assume finite values (since their instantaneous utilities are proportional to $\tm{S}$). Second, these strategies at equilibrium are in some bounded interval $[\underline{\theta},\overline{\theta}]$. For users, this is evident since they buy tokens proportionally to $\TM{S}{t+1}$ and from Assumption~\ref{asm:service-level}. For the validators, they sell all $\tm{\RRB} = O(\tm{S})$ tokens $\tokenB$, they never sell or buy tokens $\tokenA$. Third, the  utilities of the players are concave due to Assumption~\ref{asm:schemes}. Finally, Theorem 4.5 in \cite{stokey1989recursive} yields the desired result (concavity implies that the necessary equilibrium conditions maximize a round-$t$ decomposition of the utility, like  \eqref{eq:val-utility-two-tokens} for validators, while Theorem 4.5 in \cite{stokey1989recursive} ensures the maximum for the decomposition indeed maximizes the corresponding player utilities, thus an equilibrium).
\end{proof}

\subsubsection{Stable price mechanism implementation}
At each round $t$, the system only needs to estimate the next round service value $\TM{S}{t+1}$, and thus a simpler ``one step lookahead'' oracle suffices, as opposed to the single-token case. Then, we have the following simple mechanism: 
\begin{enumerate}
\item  Users use (spend) $\tokenB$ tokens proportionally to $\tm{S}$ and buy new $\tokenB$ tokens proportionally to the next round service value $\TM{S}{t+1}$. 
    \item Validators' rewards consist of only tokens $\tokenB$ proportionally to the next round service value $\TM{S}{t+1}$.
    \item Validators sell all these $\tokenB$ tokens that they get as rewards, and keep all  $\tokenA$ tokens they had from the beginning. 
\end{enumerate}
It is worth pointing out that in our two-token model, users \emph{can} also buy tokens $\tokenA$, and validators \emph{can} sell them as well. However, in the equilibria induced by the proposed mechanisms, it is not convenient for either party to do so.

\begin{remark} 
We note that the equilibria described above are closely related to certain proposed implementations of IOTA \citep{saa2023iota}. As in our proposed reward scheme and equilibria,  rewards in IOTA system are uniquely provided in Mana tokens (= token $\tokenB$), while staking is conducted via IOTA tokens (= token $\tokenA$).  Furthermore, only Mana tokens can be traded to ``[...] individuals that in some sense are external to the system'' \citep[page~9]{saa2023iota}. In this context, our equilibria demonstrate that such designs, which forbid trading governance tokens, may still be sustainable. Specifically, Theorem~\ref{th:two-token-equilibria-exist} says that there exist equilibria where, even if trading governance tokens indirectly via some off-chain/secondary market became possible, players would not find it advantageous to do so.
\end{remark}

\section{Conclusions and research directions} 

We studied the problem of token monetary policy in blockchain systems viz-\`a-viz a set of desiderata that are crucial for the security and sustainability of blockchain systems.
Our analysis lead to our proposal of a new general tool
for tokenomics policy, quantitative  rewarding, which judiciously adjusts the rewards given to validators with the objective of achieving, at equilibrium, desirable paths for the token price, e.g., stable prices. In the light of this, 
we investigated how the long-term equilibria between users, validators and token flows are different for blockchains with a single token (used for transaction fees and staking) and two separate tokens. An important finding of this analysis, is that 
the two-token model affords additional flexibility that can handle a broader variation of service values at equilibrium. 
Specifically, on the issue of implementation of monetary policies satisfying all our desiderata, the results for single token tokenomics in Section~\ref{sec:single-token} highlight the need for some ``global'' information about future fluctuations in the service value, making  such policies difficult to realize (namely, in order to avoid buybacks and price fluctuations,  oracles that predict future shocks are needed to avoid reward explosion). In two-token systems, instead, the policy only needs to know the near future service value to control prices and enforce all our desiderata, a more tractable objective.  
In summary, in order to facilitate all our desiderata for a monetary policy, a suitable proxy for the service value $\tm{S}$ is essential, and for two-token systems the presence of such proxy can be put to use much more effectively via our proposed QR mechanism. 
Future work could delve  into assessing 
various possible proxies and their on-chain implementation 
as well as the 
incentive compatibility for users and validators 
that now will have the additional leverage of observing any of the oracle computations and data dependencies that may be taking  place publicly on-chain
and could be influenced by their actions.

\paragraph{Acknowledgments.} We are grateful to Christian Badertscher and Manvir Schneider for insightful comments and discussions on an earlier version of this work.

\bibliography{tokenomics}

% %%  appendix with proofs and additional material
\appendix
\newpage

\section{Postponed Proofs}
\subsection{Proofs of Section~\ref{sec:single-token}}\label{app:proof-steps}

\subsubsection{Proof of Lemma~\ref{le:prices-general}}
In the following, we consider a generic  symmetric equilibrium (Definition~\ref{def:weakly-symmetric-equilibrium}), 
\begin{align}
	(\tm{u_i},\tm{b_i}) = (\tm{u_U},\tm{b_U}) && \tm{s_j}=\tm{s_V} && \tm{\price}
\end{align}
for every user $i$ and every validator $j$, where the above strategies give the corresponding token holdings via \eqref{eq:token-holdings}. The proof follows the same steps as in \cite[Proof of Lemma~1]{decentralization_friction}. 
\begin{enumerate}
	\item \textbf{Validators' part.}
	Denote by $\VE_j^t$ the discounted utility of validator $j$ at equilibrium starting from step $t$.
	By the Principle of Optimality, 
	\begin{align}
		\VE_j^t = \max_{\tm{s_j}}[\tm{\price}\cdot \tm{s_{j}} - v] + \delta \VE_j^{t+1}
	\end{align}
	subject to the selling constraint \eqref{eq:val-sell-constraint}.
	We next show that we must have
	\begin{align}\label{eq:optimality-derivative-val}
		\frac{\partial \VE_j^t}{\partial \tm{s_{j}}} = 0  && \Longrightarrow && \tm{\price} = \delta \frac{\partial \VE_j^{t+1}}{\partial \TM{\token_{j}}{t+1}} 
	\end{align}
	The LHS  uses (crucially) that in the equilibrium (Definition~\ref{def:weakly-symmetric-equilibrium} and Remark~\ref{rem:positive-token-holding}) the selling constraint \eqref{eq:val-sell-constraint} is not binding (not strict), thus implying that the derivative must be zero at equilibrium.  The  RHS holds because 
	\begin{align}
		\frac{\partial \VE_j^t}{\partial \tm{s_j}} = \tm{\price} + \delta \frac{\partial \VE_j^{t+ 1}}{\partial \tm{s_j}} && \text{ and } &&  \frac{\partial \VE_j^{t+ 1}}{\partial \tm{s_j}} = \frac{\partial \TM{\token_{j}}{t+1}}{\partial \tm{s_j}}\frac{\partial \VE_j^{t+ 1}}{\partial \TM{\token_j}{t+1}}  \stackrel{\eqref{eq:token-holdings}}{=} (-1) \frac{\partial \VE_j^{t+ 1}}{\partial \TM{\token_j}{t+1}}
	\end{align}
	Next observe that, since $\tm{s_j} = \tm{\token_j} + \tm{R_j} - \TM{\token_j}{t+1} $, we also have
	\begin{align}
		\frac{\partial \VE_j^t}{\partial \tm{\token_j}} = \tm{\price} \cdot \left(1 + \tm{R}\cdot \frac{(n-1)\tm{\token_V}}{(\tm{\token_j} + (n-1)\tm{\token_V})^2} \cdot r' \left(\frac{\tm{\token_j}}{\tm{\token_j} + (n-1)\tm{\token_V}}\right)\right)
	\end{align}
	Plugging this  into \eqref{eq:optimality-derivative-val}, we obtain
	\begin{align}
		\label{eq:analysis-val-price_proof-appendix}
		\TM{\price}{t-1} = \delta \cdot \tm{\price} \cdot \left(1 + \frac{\tm{R}}{\tm{\token_V}}\cdot \frac{n-1}{n^2} \cdot r' (\frac{1}{n}) \right) 
	\end{align}
	
	\item \textbf{Users' part.} Denote by $\UE_i^t$ the discounted utility of user $i$ 
	at equilibrium starting from step $t$, 
	By the Principle of Optimality, 
	\begin{align}
		\UE_i^t = \max_{(\tm{u_i},\tm{b_i})}[\tm{S}\cdot g(n) \cdot s(\tm{u_i},\tm{u_{-i}}) - \tm{\price}\cdot \tm{b_{i}}] + \delta \UE_i^{t+1}
	\end{align}
	where $(\tm{u_i},\tm{b_i})$ satisfies the corresponding buying constraint \eqref{eq:usr-sell-constraint}, and $\tm{u_{-i}}$ denotes the use strategies of all other users. We next show that we must have
	\begin{align}\label{eq:optimality-derivative-usr}
		\frac{\partial \UE_i^t}{\partial \tm{b_{i}}} = 0  && \Longrightarrow && \tm{\price} = \delta \frac{\partial \UE_i^{t+1}}{\partial \TM{\token_{i}}{t+1}} 
	\end{align}
	The LHS  uses (crucially) that in the equilibrium (Definition~\ref{def:weakly-symmetric-equilibrium} and Remark~\ref{rem:positive-token-holding}) the buying constraint \eqref{eq:usr-sell-constraint} is not binding (not strict), thus implying that the derivative must be zero at equilibrium. 
	The  RHS holds because 
	\begin{align}
		\frac{\partial \UE_i^t}{\partial \tm{b_i}} = -\tm{\price} + \delta \frac{\partial \UE_i^{t+ 1}}{\partial \tm{b_i}} && \text{ and } &&  \frac{\partial \UE_i^{t+ 1}}{\partial \tm{b_i}} = \frac{\partial \TM{\token_{i}}{t+1}}{\partial \tm{b_i}}\frac{\partial \UE_i^{t+ 1}}{\partial \TM{\token_i}{t+1}}  \stackrel{\eqref{eq:token-holdings}}{=} (+1) \frac{\partial \UE_i^{t+ 1}}{\partial \TM{\token_i}{t+1}}
	\end{align}
	We next show that 
	\begin{align}\label{eq:analysis-usr-price-derivative-proof-appendix}
		\frac{\partial \UE_i^t}{\partial \tm{\token_i}} = 
		\begin{cases}
			\tm{S} \cdot g(n) \cdot \frac{(m-1)\tm{\token_U}}{(\token_i^t + (m-1)\tm{\token_U})^2} \cdot s'(\frac{\token_i^t}{\token_i^t + (m-1)\tm{\token_U}}) & \text{if at equilibrium } \tm{u_i} = \tm{\token_U}\\
			\tm{\price} & \text{if at equilibrium } \tm{u_i} < \tm{\token_U}
		\end{cases}
	\end{align}
	thus implying
	\begin{align}
		\label{eq:analysis-usr-price-proof-appendix}
		\TM{\price}{t-1} = \delta \cdot \begin{cases}
			\frac{\tm{S}}{\tm{\token_U}} \cdot g(n) \cdot \frac{m-1}{m^2}\cdot s'(\frac{1}{m}) & \text{if at equilibrium } \tm{u_i} = \tm{\token_U} \\
			\tm{\price} & \text{if at equilibrium } \tm{u_i} < \tm{\token_U} 
		\end{cases}
	\end{align}
	We distinguish the two cases in \eqref{eq:analysis-usr-price-derivative-proof-appendix}. If at equilibrium $\tm{u_i} = \tm{\token_U}$, then $\tm{b_{i}} = \TM{\token_i}{t+1} $ and 
	\begin{align}
		\UE_i^t = [\TM{S}{t} \cdot g(n) \cdot s(\tm{\token_i}, \tm{\token_{-iU}}) - \tm{\price}\cdot \TM{\token_i}{t+1}] + \delta \UE_i^{t+1}
	\end{align}
	where $\tm{\token_{-iU}}$ denotes the token holdings at equilibrium of all but user $i$ at step $t$. 
	The above equation implies that the corresponding derivative satisfies 
	\begin{align}
		\frac{\partial \UE_i^t}{\partial \tm{\token_i}} = \TM{S}{t} \cdot g(n) \cdot  \frac{m-1}{m^2}\cdot s'(\frac{1}{m})
	\end{align}
	If at equilibrium $\tm{u_i} < \tm{\token_U}$, then $\tm{b_i}= \TM{\token_i}{t+1} - \tm{\token_i}+ \tm{u_i}$ and 
	\begin{align}
		\UE_i^t = [\TM{S}{t} \cdot g(n) \cdot s(\TM{u_i}{t}, \TM{u_{-iU}}{t}) - \tm{\price}\cdot \tm{b_i}] + \delta \UE_i^{t+1}
	\end{align}
	thus implying that the corresponding derivative satisfies
	\begin{align}
		\frac{\partial \UE_i^t}{\partial \tm{\token_i}} = -\tm{\price}\cdot \frac{\partial \tm{b_i}}{\partial \tm{\token_i}} = -\tm{\price}\cdot  (-1) = \tm{\price} 
	\end{align}
	as in this case the optimal $\tm{u_i}<\tm{\token_U}= \tm{\token_i}$ is not affected by $\tm{\token_i}$.
\end{enumerate}

\subsubsection{Proof of Theorem~\ref{tH:prices_implies_generic}}
Condition \eqref{eq:price-growth} on the prices and Lemma~\ref{le:prices-general} imply $\delta\tm{\Rew}= \gamma$, and therefore $\tm{\Rew} = \gamma / \delta \neq 1$. The desired result follows from the definition of $\tm{\Rew}$ and $\tm{\Ser}$ and from  identity \eqref{eq:usr-val-connection} from Lemma~\ref{le:prices-general} (second part).  More in detail, we have
\begin{align}
	\tm{\Rew} \stackrel{\eqref{eq:val-price}}{=} 1 + \frac{\tm{R}}{n\tm{\token_V}}\cdot \frac{n-1}{n} \cdot r' (\frac{1}{n}) = 1 + \tm{\rewtostake} \cdot \kappa_R \ . 
\end{align}
Moreover 
\begin{align}
	\tm{\Rew} \stackrel{\eqref{eq:usr-val-connection}}{=} \frac{\tm{\Ser}}{\tm{\price}}   \stackrel{\eqref{eq:usr-price}}{=} \frac{\tm{S}}{\tm{\price} \cdot m\tm{\token_U}} \cdot g(n) \cdot \frac{m-1}{m}\cdot s'(\frac{1}{m}) = 
	\tm{\sertofee} \cdot \kappa_S \ . 
\end{align}
Since $\tm{\Rew} = \gamma / \delta$, the two identities above imply that   $\tm{\rewtostake} = \rewtostake$ and $\tm{\sertofee} = \sertofee$, for  constants $\rewtostake$ and $\sertofee$.
Hence, we have 
\begin{align}
	\TM{\price}{t-1}  \stackrel{\eqref{eq:val-price}}{=}  \tm{\price} \cdot \delta \cdot  \tm{\Rew} = & \tm{\price} \cdot \delta \cdot  (1 + \rewtostake \cdot \kappa_R ) \\
	= & \tm{\price} \cdot \delta \cdot  \tm{\sertofee} \cdot \kappa_S \ . 
\end{align}
We thus have proved that the first three conditions of Definition~\ref{def:symm-eq-general} hold. The last condition in Definition~\ref{def:symm-eq-general} on the token holdings follows from $\tm{\Rew} \neq 1$ and from the last part of Lemma~\ref{le:prices-general}  which states that  
$\tm{u_i} = \tm{\token_U}$ and $\tm{b_i}=\TM{\token_U}{t+1}$.

\subsubsection{Proof of Corollary~\ref{cor:stable-price-equilibria}}
Since we have stable prices,  $\tm{\price}=1$, Theorem~\ref{tH:prices_implies_generic} and Definition~\ref{def:symm-eq-general} imply that the token holdings must satisfy 

\begin{align*}
	\frac{\tm{S}}{m\tm{\token_U}} = \sertofee\ ,   
	&& 
	\frac{\tm{R}}{n\tm{\token_V}} = \rewtostake\ .  
\end{align*}
The corollary then follows from \eqref{eq:constants-stable-case}.

\subsubsection{Proof of Corollary~\ref{cor:reward-growth-single}} Simply observe that the minimal rewards satisfying \eqref{eq:th:gen-symm-equilibrium-no-buy-back} must satisfy that condition with equality, that is, 
\begin{align}
	\TM{R}{t+1} = \tm{R}\cdot(1+a)   - b \cdot \TM{S}{t+1} \cdot (\delta \Rew)^{t+1}
\end{align}
and, by induction on $t$, this yields $\tm{\Roptsingle}$ as in Corollary~\ref{cor:reward-growth-single}.

\subsection{Proofs of Section~\ref{sec:two-token-model}}\label{sec:two-token-analysis}

\subsubsection{Proof of Lemma~\ref{lem:prices-two-tokens}}
By adapting the analysis in \cite{decentralization_friction}, 
the  discounted utility at equilibrium of validator $j$ starting from step $t$ satisfies
\begin{align}\label{eq:val-utility-two-tokens}
	\tm{\VE_j} = \max_{\tm{s_j}}[\tm{\priceA}\cdot  \tm{\sellA{j}} + \tm{\priceB}\cdot \tm{\sellB{j}} - v] + \delta \TM{\VE_j}{t+1} 
\end{align}
where in a  symmetric equilibrium the selling constraints  \eqref{eq:val-sell-constraint-A} and \eqref{eq:val-sell-constraint-B} for validator $j$ become 
\begin{align}
	\label{eq:val-feasible-token-A}
	\TM{\tokenA_j}{t+1}  =  \tm{\tokenA_j} + \tm{\RRA} \frac{\tm{\tokenA_j}}{\tm{\tokenA_j} + (n-1)\tm{\tokenA_V}} - \tm{\sellA{j}} \geq 0 \\
	\label{eq:val-feasible-token-B}
	\TM{\tokenB_j}{t+1} =  \tm{\tokenB_j} + \tm{\RRB} \frac{\tm{\tokenA_j}}{\tm{\tokenA_j} + (n-1)\tm{\tokenA_V}} - \tm{\sellB{j}}\geq 0
\end{align} 
(Note that the two expressions above are \emph{not} symmetric as token $\tokenA$ is the only one used for staking -- the fraction term is identical.)
We next show that, under the optimal selling policy at equilibrium $\tm{s_j}$ we must have
\begin{align}\label{eq:optimality-derivative-two-tokens-A}
	\frac{\partial \tm{\VE_j}}{\partial \tm{\sellA{j}}} = 0  && \Longleftrightarrow && \tm{\priceA} = \delta \frac{\partial \TM{\VE_j}{t+1}}{\partial \TM{\tokenA_j}{t+1}} 
\end{align}
The LHS  uses (crucially) that in the equilibrium (Definition~\ref{def:weakly-symmetric-equilibrium} and Remark~\ref{rem:positive-token-holding}) the selling constraint \eqref{eq:val-feasible-token-A} is not binding (not strict), thus implying that the derivative must be zero.  The equivalence with the RHS is because 
\begin{align}
	\frac{\partial \tm{\VE_j}}{\partial \tm{\sellA{j}}} = \tm{\priceA} + \delta \frac{\partial \TM{\VE_j}{t+1}}{\partial \tm{\sellA{j}}} && \text{ and } &&  \frac{\partial \TM{\VE_j}{t+1}}{\partial \tm{\sellA{j}}} = \frac{\partial \TM{\tokenA_j}{t+1}}{\partial \tm{\sellA{j}}}\frac{\partial \TM{\VE_j}{t+1}}{\partial \TM{\tokenA_j}{t+1}}  = (-1) \frac{\partial \TM{\VE_j}{t+1}}{\partial \TM{\tokenA_j}{t+1}}
\end{align}
where $\frac{\partial \TM{\tokenA_j}{t+1}}{\partial \tm{\sellA{j}}}=-1$ uses (again) that the selling constraint \eqref{eq:val-feasible-token-A} is not binding (not strict). The exact same proof applies to the other token $\tokenB$, using the corresponding selling constraint \eqref{eq:val-feasible-token-B} to obtain
\begin{align}\label{eq:optimality-derivative-two-tokens-B}
	\frac{\partial \tm{\VE_j}}{\partial \tm{\sellB{j}}} = 0  && \Longleftrightarrow && \tm{\priceB} = \delta \frac{\partial \TM{\VE_j}{t+1}}{\partial \TM{\tokenB_{j}}{t+1}} 
\end{align}
Next observe that
\begin{align}
	\tm{\VE_j} = &  \tm{\priceA}  \tm{\sellA{j}} +  \tm{\priceB} \tm{\sellB{j}} - v   + \delta \VE_j^{* t+1} \\
	= &  \tm{\priceA}  \left[\tm{\tokenA_j} + \tm{\RRA} \frac{\tm{\tokenA_j}}{\tm{\tokenA_j} + (n-1)\tm{\tokenA_V}} - \TM{\tokenA_j}{t+1}\right] +  \\ 
	& \tm{\priceB} \left[\tm{\tokenB_j} + \tm{\RRB} \frac{\tm{\tokenA_j}}{\tm{\tokenA_j} + (n-1)\tm{\tokenA_V}} - \TM{\tokenB_j}{t+1} \right]  - v   + \delta \TM{\VE_j}{t+1} 
\end{align}
thus implying
\begin{align}
	\frac{\partial \tm{\VE_j}}{\partial \tm{\tokenA_j}} = & \tm{\priceA} \left[ 1 + \tm{\RRA} \frac{(n-1)\tm{\tokenA_V}}{(\tm{\tokenA_j} + (n-1)\tm{\tokenA_V})^2}\right] + \tm{\priceB} \left[\tm{\RRB} \frac{(n-1)\tm{\tokenA_V}}{(\tm{\tokenA_j} + (n-1)\tm{\tokenA_V})^2}\right] \\
	\frac{\partial \tm{\VE_j}}{\partial \tm{\tokenB_{j}}} = & \tm{\priceB} 
\end{align}
Plugging these into \eqref{eq:optimality-derivative-two-tokens-A} and \eqref{eq:optimality-derivative-two-tokens-B}, we obtain
\begin{align}
	\TM{\priceA}{t-1} =& \delta \tm{\priceA} \left[ 1 + \frac{\tm{\RRA}}{\tm{\tokenA}_V} \frac{n-1}{n^2}\right] + \delta \tm{\priceB} \left[ \frac{\tm{\RRB}}{\tm{\tokenA_V}} \frac{n-1}{n^2}\right]\\
	\TM{\priceB}{t-1} =& \delta \tm{\priceB}  
\end{align}
which proves Lemma~\ref{lem:prices-two-tokens}.

\subsubsection{Proof of Lemma~\ref{lem:prices-two-tokens-usr-side}}

The proof is identical to the one for the single token in Section~\ref{app:proof-steps}, users part. 
The equation for $\tokenB$ comes from the same analysis of the single token. As for token $\tokenA$, we observe that since token $\tokenA$ does not affect $\tm{S}_i$,  the  value of the service allocated to $i$, we have  $$\frac{\partial \tm{\UE_i}}{\partial \tm{\tokenA{_i}}} = \tm{\priceA}$$ for all $t$.  The  optimality conditions of strategy $\buyA{i}$ also yields $$\tm{\priceA}  =  \delta \frac{\partial \TM{\UE_i}{t+1}}{\partial \tm{\buyA{j}}}.$$ The latter quantity, by the previous equation with $t+1$ in place of $t$, equals $\delta \TM{\priceA}{t+1}$.

\subsubsection{Proof of Theorem~\ref{th:symm-eq-two-token}}
The first three items in the definition of generic symmetric equilibrium (Definition~\ref{def:symm-two-tokens}) follow from Lemma~\ref{lem:prices-two-tokens-usr-side},  from the definition of symmetric equilibrium requiring that users selling constraints of token $\tokenB$ are non-strict  (Definition~\ref{def:symm-two-tokens}), and from the assumption about the prices \eqref{eq:price-growth}. 

The last condition in Definition~\ref{def:symm-two-tokens} is a rewriting of the prices in Lemma~\ref{lem:prices-two-tokens} and of definition of symmetric equilibrium requiring that  validators selling constraints of token $\tokenA$ are non-strict (Definition~\ref{def:symm-two-tokens}).

\subsection{Price analysis for the proposed two-token equilibrium in Section~\ref{sec:stable-price-mechanism-two}}\label{app:proof-priceA-proposed-equilibrium}
We show Equation~\ref{eq:priceA-proposed-equilibrium}, which we reproduce here for convenience:
\begin{align}\label{eq:priceA-proposed-equilibrium-app}
	\tm{\priceA} & =  \frac{\TM{\priceA}{t-1}}{\delta} -  \TM{S}{t+1}\cdot  \frac{L }{\TM{\tokenA_V}{0}} \frac{n-1}{n^2}   \ . 
\end{align}
From the equation of the prices \eqref{eq:val-priceA-two-token}, we obtain 
\begin{align}
	\TM{\priceA}{t-1} & = \delta  \cdot \tm{\priceA} \cdot (1 +\tm{\inflationA}) +  \delta \cdot \tm{\priceB} \cdot \tm{\inflationB} \\
	& = \delta  \cdot \tm{\priceA}  +  \delta \cdot \tm{\priceB} \cdot \frac{\tm{\RRB}}{n\tm{\tokenA}_{V}} \frac{n-1}{n}  \\
	& = \delta  \cdot \tm{\priceA}  +  \delta \cdot \TM{S}{t+1}\cdot L \cdot \frac{1}{n\TM{\tokenA_V}{0}} \frac{n-1}{n}  
\end{align}
and by rearranging the terms we get \eqref{eq:priceA-proposed-equilibrium-app}, that is, \eqref{eq:priceA-proposed-equilibrium}.

\section{Comparison with equilibria and price paths in \cite{decentralization_friction}}\label{app:eq:simpler}
 In this section, we show that the class of equilibria we study here is more general than those  in \cite{decentralization_friction}    which focuses on equilibria having a \emph{stronger} symmetry condition. The latter work further assumes that the service value is known and under the control of the designer, a very strong assumption that is infeasible to justify in practice. %Hence, our  analysis extends both the kind of equilibria and the type of reward and sharing schemes used studied in \cite{decentralization_friction}. 

\paragraph{Deterministic service value} In 
\cite{decentralization_friction}, the author considers a setting where $\tm{S}$ can be chosen by the system,  though the system incurs some quadratic time-dependent cost, and the equilibria maximizing  the profit of the system in this setting  correspond to some $\tm{S}$ satisfying our Assumption~\ref{asm:service-level} (hence, a special case of ours). 

\paragraph{Restricted class of equilibria} The condition that tokens holding must be strictly positive at all time steps is identical to the definition of equilibrium in \cite{decentralization_friction}. The latter work focuses on a class of  equilibria with a stronger symmetry condition. 
\begin{remark}[stronger version  of symmetric equilibria]\label{rem:stron-equilibrium}
The results in \cite{decentralization_friction} mostly focus on equilibria that have two additional conditions, namely,  validators  hold a constant fraction of all tokens (constant staking),
    \begin{align}\label{eq:stronger-equilibrium-tokens}
        \frac{n \tm{\token_V}}{m \tm{\token_U} + n \tm{\token_V}} = \rho >0
    \end{align}
    and the reward is also a fraction of all current tokens (constant inflation from validation)
    \begin{align}\label{eq:stronger-equilibrium-rewards}
        \frac{\tm{R}}{m \tm{\token_U} + n \tm{\token_V}} = I > 0 
    \end{align}
\end{remark}

For the above class of equilibria, the next result on the prices holds.

\begin{lemma}[Lemma~1 in \cite{decentralization_friction}]\label{le:prices-restricted}
   For the case of reward sharing and service schemes in \eqref{eq:simpler-schemes}, the following holds. 
    In any symmetric equilibrium (satisfying \eqref{eq:stronger-equilibrium-tokens} and \eqref{eq:stronger-equilibrium-rewards})
\begin{align}\label{eq:price-restricted}
    \TM{\price}{t-1}  = \tm{\price} \cdot \delta \cdot  \left(1 + \frac{I}{\rho}\cdot \frac{n-1}{n}\right) \ . 
\end{align}
\end{lemma}

Our results in Section~\ref{sec:result} generalize the price analysis above to a more general class of reward  and service sharing schemes, and \emph{without} imposing the constant staking and constant inflation from validation conditions in Remark~\ref{rem:stron-equilibrium} --note also that Assumptions~\ref{asm:service-level}  and \ref{asm:schemes} hold in  \cite{decentralization_friction} as well. Indeed, one of the distinguishing features of our ``quantitative rewarding'' mechanisms is to ``adjust'' rewards to ``absorb'' service level fluctuations and shocks, meaning that we cannot keep constant staking and inflation from validation. This is essentially due to the fact that stable prices require a particular (constant) $I$ and $\rho$ above, which then \emph{violate the nobuyback condition} in case of such shocks. 

\section{An example of an alternative rewards scheme}\label{sec:special rewards}
\newcommand{\equilibriumprofile}{\frac{\tm{\token_j}}{\sum_v \tm{\token_v}}}
\newcommand{\symmequilibriumprofile}{\frac{\tm{\token_j}}{\tm{\token_j} + (n-1) \tm{\token_V}}}
Consider the following ``soft cap'' function based on the sigmoid (logistic) function $\sigma(x)= \frac{\mathrm{e}^x}{1+\mathrm{e}^x}$, 
\begin{align}
	softcap(x,\tau,T) := 1 - \sigma(T\cdot(x - \tau)) =  \frac{1}{\mathrm{e}^{T\cdot(x - \tau)} + 1} 
\end{align}
Our reward function uses this soft cap, which intuitively aims at capping each validator's stake to a fraction $1/n$ of the total stake:
\begin{align}
	r(x) := 2x\cdot softcap(x,1/n,T) = \frac{2x}{\mathrm{e}^{T\cdot(x - \frac{1}{n})} + 1} 
\end{align}
In this case, 
\begin{align}
	r'(x) = -\dfrac{2\left(\left(Tx-1\right)\mathrm{e}^{T\cdot\left(x-\frac{1}{n}\right)}-1\right)}{\left(\mathrm{e}^{T\cdot\left(x-\frac{1}{n}\right)}+1\right)^2}\ ,  && r'(1/n) =1 - T/2n \ . 
\end{align}
Hence the condition $r'(1/n)>0$ required in Assumption~\ref{asm:schemes} is satisfied for all $n>T/2$, meaning that the designer can set $T$ in order to be satisfied for the number $n$ of validators.

%% additional extensions in appendix
\section{Benchmark (necessity of quantitative rewarding)}
In this section, we consider as a benchmark a \emph{fixed} rewards mechanism, that is, the case in which the rewards $\tm{R}$ are fixed in advance independently of the actual service level changes $\tm{S}$. We then study how the latter affect the price of the token over time, when the tight no buyback condition is satisfied. We show below (Corollary~\ref{cor:benchmark}) that prices are \emph{not} stable even in a simple restriction of the service levels $\tm{S}$, unless the rewards are set properly to match certain conditions. Moreover, when rewards are kept constant, prices fluctuate proportionally to the service level fluctuations.  

We stress that the connection is possible due to our no buybacks condition, in particular its \emph{tight} version in which the system neither buys nor sells tokens ever:
\begin{align}\label{eq:nobuyback-tight}
    m  \tm{b_U} = n \tm{s_V}\ .  
\end{align}
Note that, if $\tm{u_i}=\tm{\token_U}$, then we have $\tm{b_U}=\TM{\token_U}{t+1}$.
We also have 
\begin{align*}
	n\tm{s_V} & =   n \left(\tm{\token_V} - \TM{\token_V}{t+1} + \tm{R}\cdot r(1/n)\right)  \\ & = \frac{\tm{R}}{\rewtostake}  - \frac{\TM{R}{t+1}}{\rewtostake}  + \tm{R} \cdot n \cdot r(1/n) \ . 
\end{align*}
The following is thus a direct consequence of Lemma~\ref{le:prices-general}.

\begin{corollary}\label{cor:benchmark}
    Consider geometric rewards, that is, $\tm{R}=\alpha^t \cdot \TM{R}{0}$ for all $t$ and for some $\alpha>0$.  In any symmetric equilibrium, if $\tm{u_i}=\tm{\token_U}$ for all $t$, then   prices are proportional to the (next round) service level divided by the current rewards, that is, 
    \begin{align}
        \tm{\price} = 
        H\cdot \TM{S}{t+1}/\tm{R}   \ ,
    && H = \delta  \cdot\frac{g(n)}{\left(\frac{1-\alpha}{\rewtostake}  + n \cdot r(1/n)\right)}   \cdot \frac{m-1}{m}\cdot s'(\frac{1}{m}) \ . 
    \end{align}
    This implies that \begin{enumerate}
        \item Price are stable if and only if the service level follows a geometric growth with the same rate, that is, $\tm{S}=\alpha^t \cdot \TM{S}{0}$; 
        \item For constant rewards ($\tm{R}=\TM{R}{0}$), prices are proportional to the (next round) service level, that is, $\tm{\price} = 
        H\cdot \TM{S}{t+1}/\TM{R}{0} $. 
    \end{enumerate}, 
\end{corollary}

\begin{proof} From the second part of Lemma~\ref{le:prices-general}, 
    \begin{align}
        \tm{\price} = \delta \cdot 
        \TM{\Ser}{t+1}  \ ,
    && \TM{\Ser}{t+1} = \frac{\TM{S}{t+1}}{m\TM{\token_U}{t+1}} \cdot g(n) \cdot \frac{m-1}{m}\cdot s'(\frac{1}{m}) \ . 
    \end{align}
    where 
    \begin{align}
       \TM{\Ser}{t+1} = \frac{\TM{S}{t+1}}{m\TM{\token_U}{t+1}} \cdot g(n) \cdot \frac{m-1}{m}\cdot s'(\frac{1}{m}) =  \frac{\TM{S}{t+1}}{n\tm{s_V}} \cdot g(n) \cdot \frac{m-1}{m}\cdot s'(\frac{1}{m})\ . 
    \end{align}
    To complete the proof, we observe that 
    \begin{align*}
	n\tm{s_V} & =   n \left(\tm{\token_V} - \TM{\token_V}{t+1} + \tm{R}\cdot r(1/n)\right)  \\ & = \frac{\tm{R}}{\rewtostake}  - \frac{\TM{R}{t+1}}{\rewtostake}  + \tm{R} \cdot n \cdot r(1/n) \\
 & = 
 \frac{\alpha^t \TM{R}{0}}{\rewtostake}  - \frac{\alpha^{t+1} \TM{R}{0}}{\rewtostake}  + \alpha^t \TM{R}{0} \cdot n \cdot r(1/n) \\
 & = 
 \alpha^t \TM{R}{0} \left(\frac{1-\alpha}{\rewtostake}  + n \cdot r(1/n)\right)\ .  
\end{align*}
\end{proof}

Note that for generic rewards, $\tm{R}$, the above constant $H$ will actually depend on $t$. Stable prices then boil down to ensure  that $\tm{H}\cdot \TM{S}{t+1}/\tm{R}$ is constant. 
%%%% 

%\input{appendix}
% %%%% 
\end{document}